\documentclass[12pt]{article}
\usepackage{latexsym,amsmath,amssymb,amsthm,natbib, commath, graphicx, enumerate}

\newcommand{\bb}{\mbox{\boldmath $b$}}
\newcommand{\bc}{\mbox{\boldmath $c$}}
\newcommand{\bd}{\mbox{\boldmath $d$}}
\newcommand{\be}{\mbox{\boldmath $e$}}

\newcommand{\bg}{\mbox{\boldmath $g$}}

\newcommand{\bu}{\mbox{\boldmath $u$}}
\newcommand{\bv}{\mbox{\boldmath $v$}}

\newcommand{\bx}{\mbox{\boldmath $x$}}
\newcommand{\by}{\mbox{\boldmath $y$}}

\newcommand{\bA}{\mbox{\boldmath $A$}}
\newcommand{\bB}{\mbox{\boldmath $B$}}
\newcommand{\bC}{\mbox{\boldmath $C$}}
\newcommand{\bD}{\mbox{\boldmath $D$}}
\newcommand{\bE}{\mbox{\boldmath $E$}}

\newcommand{\bG}{\mbox{\boldmath $G$}}
\newcommand{\bH}{\mbox{\boldmath $H$}}
\newcommand{\bI}{\mbox{\boldmath $I$}}
\newcommand{\bJ}{\mbox{\boldmath $J$}}
\newcommand{\bK}{\mbox{\boldmath $K$}}
\newcommand{\bL}{\mbox{\boldmath $L$}}
\newcommand{\bM}{\mbox{\boldmath $M$}}
\newcommand{\bN}{\mbox{\boldmath $N$}}

\newcommand{\bP}{\mbox{\boldmath $P$}}
\newcommand{\bQ}{\mbox{\boldmath $Q$}}
\newcommand{\bR}{\mbox{\boldmath $R$}}
\newcommand{\bS}{\mbox{\boldmath $S$}}
\newcommand{\bT}{\mbox{\boldmath $T$}}
\newcommand{\bU}{\mbox{\boldmath $U$}}
\newcommand{\bV}{\mbox{\boldmath $V$}}

\newcommand{\bX}{\mbox{\boldmath $X$}}
\newcommand{\bY}{\mbox{\boldmath $Y$}}

\newcommand{\bzero}{\mbox{\bf 0}}
\newcommand{\bone}{\mbox{\bf 1}}

\newcommand{\bbeta}{\mbox{\boldmath $\beta$}}

\newcommand{\bgamma}{\mbox{\boldmath $\gamma$}}

\newcommand{\bvarepsilon}{\mbox{\boldmath $\varepsilon$}}

\newcommand{\bmu}{\mbox{\boldmath $\mu$}}
\newcommand{\btheta}{\mbox{\boldmath $\theta$}}
\newcommand{\bTheta}{\mbox{\boldmath $\Theta$}}

\newcommand{\bLambda}{\mbox{\boldmath $\Lambda$}}

\newcommand{\bSigma}{\mbox{\boldmath $\Sigma$}}

\newcommand{\bphi}{\mbox{\boldmath $\phi$}}

\newcommand{\bxi}{\mbox{\boldmath $\xi$}}

\newcommand{\btau}{\mbox{\boldmath $\tau$}}

\newcommand{\argmin}{\operatornamewithlimits{arg\,min}}
\newcommand{\argmax}{\operatornamewithlimits{arg\,max}}
\newcommand{\mymap}{\operatorname*{\mapsto}}
\newcommand{\var}{\mbox{Var}}
\newcommand{\cov}{\mbox{Cov}}
\newcommand{\tr}{\mbox{tr}}

\newcommand{\el}{_{\ell}}

\newtheorem{lemma}{Lemma}
\newtheorem{thm}{Theorem}

\newtheorem{defn}{Definition}
\newcommand{\vecop}{\mbox{vec}}

\newcommand{\ratherbigpage}{\setlength{\textwidth}{6.5in}
  \setlength{\oddsidemargin}{-6pt}
  \setlength{\evensidemargin}{-6pt}
  \setlength{\textheight}{9.25in}
  \setlength{\topmargin}{0pt}
  \setlength{\headheight}{0pt}
  \setlength{\headsep}{5pt}}

\ratherbigpage

\begin{document}

\title{Varying-smoother models \\ for functional responses}
\author{Philip T.\ Reiss$^{1,2,*}$, Lei Huang$^3$, Huaihou Chen$^4$, and Stan Colcombe$^2$\bigskip\\$^1$Department of Child and Adolescent Psychiatry\\ and Department of Population Health, New York University
\\ $^2$Nathan S. Kline Institute for Psychiatric Research 
\\$^3$Department of Biostatistics, Johns Hopkins University\\$^4$Department of Biostatistics, University of Florida\\$^*$phil.reiss@nyumc.org
}

\maketitle
\begin{abstract}
This paper studies estimation of a smooth function $f(t,s)$ when we are given functional responses of the form $f(t,\cdot)+\mbox{error}$, but scientific interest centers on the collection of functions $f(\cdot,s)$ for different $s$. The motivation comes from studies of human brain development, in which $t$ denotes age whereas $s$ refers to brain locations. Analogously to varying-coefficient models, in which the mean response is linear in $t$, the ``varying-smoother'' models that we consider exhibit nonlinear dependence on $t$ that varies smoothly with $s$. We discuss three approaches to estimating varying-smoother models: (a) methods that employ a tensor product penalty; (b) an approach based on smoothed functional principal component scores; and (c) two-step methods consisting of an initial smooth with respect to $t$ at each $s$, followed by a postprocessing step. For the first approach, we derive an exact expression for a penalty proposed by Wood, and an adaptive penalty that allows smoothness to vary more flexibly with $s$. We also develop ``pointwise degrees of freedom,'' a new tool for studying the complexity of estimates of $f(\cdot,s)$ at each $s$. The three approaches to varying-smoother models are compared in simulations and with a diffusion tensor imaging data set.

\textbf{Key words}: Bivariate smoothing; Fractional anisotropy; Functional principal components; Neurodevelopmental trajectory; Tensor product spline; Two-way smoothing
\end{abstract}

\section{Introduction}\label{intro}
This article is concerned with functional responses that depend nonlinearly on a scalar predictor. 
The data for the $i$th of the $n$ given independent observations are assumed to be
\begin{equation}\label{thedata}t_i,\hspace{5mm}y_i(s_1),\ldots,y_i(s_L),\end{equation}
where $t_i$ lies in a domain ${\cal T}\subset {\mathbb R}$ and $s_1<\ldots<s_L$ is a fixed dense grid of points spanning a finite interval $\cal S\subset {\mathbb R}$; following \cite{ramsay2005}, we conceptualize this as having observed the entire function $y_i:\cal S\longrightarrow {\mathbb R}$.
We assume that these functional responses arise from the model
\begin{equation}\label{themod}y_i(s)= f(t_i, s) +  \varepsilon_i(s)\mbox{ for all }s\in{\cal S},\end{equation}
where $f$ is some smooth function on ${\cal T\times S}\subset{\mathbb R}^2$ 
 and  $\varepsilon_i$ is drawn from a zero-mean random error process on $\cal S$.
 
We wish to estimate $f$, and in particular we are interested in the family of functions $\{f(\cdot , s): s\in{\cal S}\}$.  
The motivation for this interest comes from studies of human development. When $t$ represents age, $f(\cdot , s)$ is the mean, as a function of age,  of the quantity measured by $y$ at point $s$. In the specific example that motivated this work, $\cal S$ is a set of locations in the brain, and $y(s)$ denotes fractional anisotropy (FA), a measure of white matter integrity, at location $s$. Thus $f(t,s)$ denotes the mean FA for that location at age $t$, and the function $f(\cdot , s)$ is what neuroscientists often refer to as the ``developmental trajectory'' of FA at location $s$. As a brief example of the scientific meaning of such trajectories, suppose that for given $s$, $f(t, s)$ characteristically increases with $t$ up to some point $t_s\equiv\argmax_t f(t,s)$, then decreases. Then the peak age $t_s$ can provide information about typical maturation for location $s$, and can be compared between diagnostic groups to study the links between psychiatric disorders and brain development \citep{shaw2007}.

In our data set, FA was measured at 107 voxels (volume units) in 146 individuals age 7--48. These $1\times 1\times 1$ mm voxels, based on registration to the FMRIB58\_FA standard space image (http://fsl.fmrib.ox.ac.uk/fsl/fslwiki/FMRIB58\_FA), trace a path along a midsagittal cross-section of the corpus callosum (see \figref{stan}). We take $s$ to represent arc length along this path, which ranges from $s_1=0$ mm (the leftmost point in the figure, toward the back of the brain) to $s_{107}=110.55$ mm. At right in \figref{stan}, a rainbow plot \citep{hyndman2010}, with FA curves color-coded by age, is used to visualize the relationship between age and the functional response. This relationship appears quite noisy and possibly non-monotonic (and hence nonlinear) in some locations. 
\begin{figure}
\begin{minipage}{\textwidth}
\centering
\includegraphics[width=.45\textwidth]{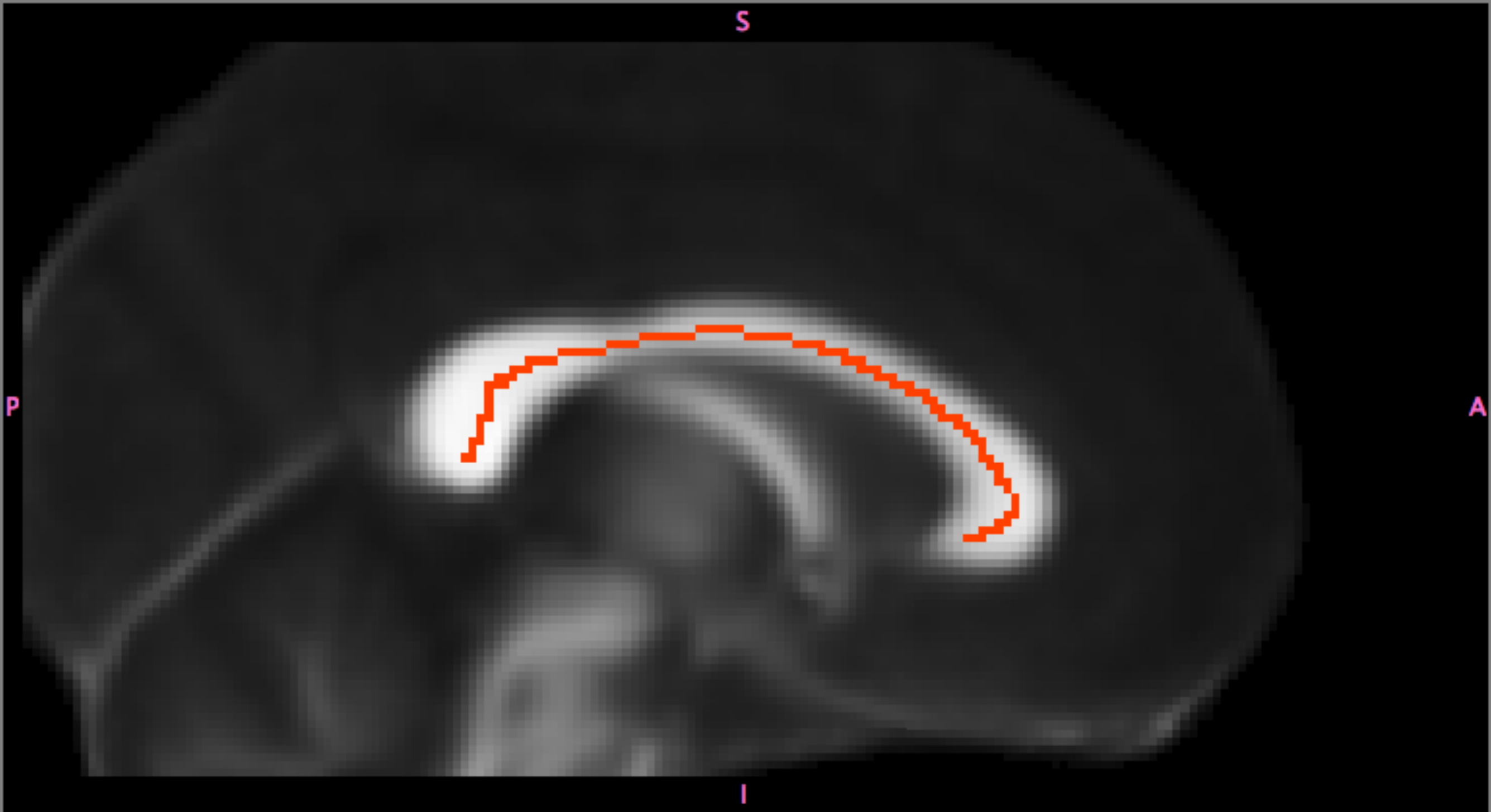}  
\includegraphics[width=.45\textwidth]{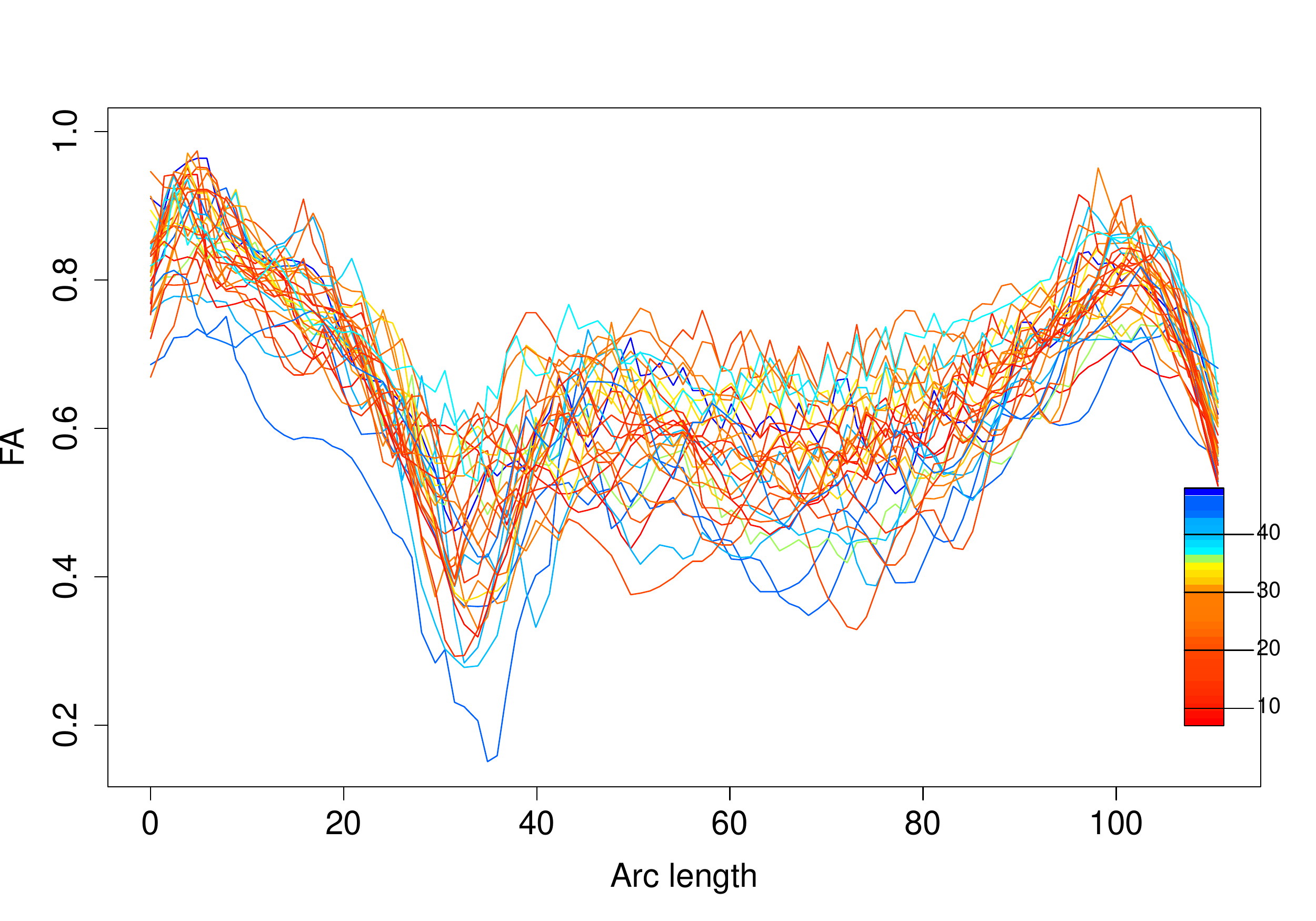}  
\label{stan}\caption{Left: Sequence of 107 corpus callosum voxels at which fractional anisotropy was recorded. Right: Rainbow plots displaying how the resulting FA profiles (functional responses) vary with age.}
\end{minipage}\end{figure}

Model \eqref{themod} for functional responses is not in itself new. Notably, \cite{greven2010} considered a more general model than \eqref{themod} for repeated functional responses. But we are aware of no previous treatments that have centered on the smooth functions $f(\cdot , s)$ 
and how they vary with $s$. To highlight this distinctive focus, we shall refer to \eqref{themod} as a \emph{varying-smoother} model. This term recalls the idea of a ``varying-coefficient'' model \citep{hastie1993}, which is what \eqref{themod} reduces to in the special case  $f(t_i, s)=t_i\beta(s)$. Varying-coefficient models are ordinarily defined for scalar (as opposed to functional) responses; specialized methodology is needed to estimate the varying coefficient $\beta(\cdot)$ in the functional-response case \citep{ramsay2005,reiss2010}. A similar point can be made regarding varying-smoother models. One can conceive of scalar-response applications in which one would like to estimate $\{f(\cdot , s): s\in{\cal S}\}$ on the basis of data $(y_i,t_i,s_i)$, $i=1,\ldots,n$, that are assumed to follow the model $y_i=f(t_i,s_i)+\varepsilon_i$. But the applications motivating our work involve functional responses, as in \eqref{thedata} and \eqref{themod}, and we shall restrict consideration to this setting.

To restate succinctly the basic distinction between the varying-coefficient and varying-smoother assumptions for model \eqref{themod}: in both cases $f(\cdot,s)$ varies in a smooth manner with $s$, but in the varying-coefficient case $f(\cdot,s)$ is linear for each $s$, whereas for varying-smoother models $f(\cdot,s)$ is in general nonlinear.

\cite{zhu2010, zhu2011} developed methodology for varying-coefficient models in which FA curves, similar to those considered here, depend linearly on age and other predictors. Their functional linear models were applied to an infant data set, whereas the much wider age range of our sample motivated our development of varying-smoother models to map the \emph{nonlinear} dependence of FA on age, along the corpus callosum, over a large portion of the lifespan.

To avoid possible confusion, we remark that varying-smoother models are very much distinct from fitting curves with ``varying smoothness.'' The latter refers, in the univariate case, to estimating a function $f(t)$ where the smoothness of $f$ varies with $t$---a goal often pursued by means of wavelets \citep{ogden1997} or by extensions of spline methodology \citep{krivobokova2008,storlie2010}. Our goal, by contrast, is to estimate $f(\cdot,s)$, a smooth function of $t$ that varies (smoothly) with $s$. 

Successful pursuit of this goal requires that we borrow strength across locations to a sufficient extent so that $f(\cdot,s)$ is more accurately estimated for each $s$, while still allowing the shape of $f(\cdot,s)$ to vary flexibly with $s$---since understanding this variation may be the principal scientific objective, for example in neurodevelopmental studies. Fully Bayesian modeling with spatially informed priors  \citep[e.g.,][]{fahrmeir2004,congdon2006}  might be a natural approach to this problem. However, in view of the high dimensionality of the functional responses in many applications, this approach may prove computationally prohibitive. The approaches of this paper rely on the roughness penalty paradigm that has been employed fruitfully in smoothing problems \citep{green1994,ruppert2003,wood2006} and functional data analysis \citep{ramsay2005}.

To help readers through what will be a rather algebra-heavy presentation,the next section collects the main notations used below, as well as stating our key assumptions. Sections~\ref{tpp} through \ref{2s} describe three basic approaches to fitting varying-smoother models. Section~\ref{pwdf} introduces \emph{pointwise degrees of freedom}, a novel tool for assessing and comparing the model complexity (with respect to $t$) of different estimates of $f(\cdot,s)$. The different approaches to varying-smoother modeling are compared in a simulation study in Section~\ref{simsec}, and applied to the corpus callosum FA data in Section~\ref{realsec}. Concluding remarks are offered in Section~\ref{discsec}.

\section{Notation and assumptions}\label{setup}
 In most of what follows we consider only a single real-valued predictor $t_i$ ($i=1,\ldots, n$). The $i$th response is a function $y_i(\cdot)$ observed at a common set of points $s_1,\ldots,s_L$, giving rise to an $n\times L$ response matrix 
\[\bY=\left(\begin{array}{ccc}y_{11} &\ldots& y_{1L} \\ \vdots & \ddots & \vdots \\ y_{n1} &\ldots& y_{nL}\end{array}\right)=\left(\begin{array}{ccc}y_1(s_1) &\ldots& y_1(s_L) \\ \vdots & \ddots & \vdots \\ y_n(s_1) &\ldots& y_n(s_L)\end{array}\right).\]  
Our methods can be extended to irregularly sampled functions by adding a presmoothing step \citep[cf.][]{chiou2003}. Let $\by_{i\cdot}^T$ and $\by_{\cdot \ell}$ denote the $i$th row and $\ell$th column of $\bY$, respectively, and let $\by=\vecop(\bY)=\left(\begin{array}{c}\by_{\cdot 1}\\\vdots\\ \by_{\cdot L}\end{array}\right)\in{\mathbb R}^{nL}$.  

Analogous notation ($\hat{\bY},\hat{\by}$, etc.)\ will be used for fitted values from our procedures for fitting model \eqref{themod}, to be described in Sections~\ref{tpp} through \ref{2s}. For all of these procedures, the fitted values can be written as $\hat{\by}=\mathbf{\cal H}\by$ for some $nL\times nL$ ``hat" matrix 
\begin{equation}\label{blockhat}\mathbf{\cal H}=\left(\begin{array}{ccc}\mathbf{\cal H}_{11} & \ldots & \mathbf{\cal H}_{1L} \\ \vdots & \ddots & \vdots \\ \mathbf{\cal H}_{L1} & \ldots & \mathbf{\cal H}_{LL}\end{array}\right),\end{equation} where each of the blocks $\mathbf{\cal H}_{\ell_1\ell_2}$ is $n\times n$; we shall denote the $(i,j)$ entry of the $(\ell_1,\ell_2)$ block by $h_{(\ell_1\ell_2)ij}$.

  We take the domain of $f(\cdot,\cdot)$ to be ${\cal T}\times{\cal S}$, where both the ``temporal'' domain $\cal T$ and the ``spatial'' or functional-response domain $\cal S$ are finite intervals on the real line. The key building blocks for our estimators of $f$ will be a basis of $K_t\leq n$ smooth functions, such as $B$-splines, defined on $\cal T$; another set of $K_s\leq L$ basis functions defined on $\cal S$; and associated penalty matrices $\bP_t$ and $\bP_s$,    respectively.  Let $\bb_t(t)=[b_{x1}(t),\ldots,b_{xK_t}(t)]^T$ where $b_{x1},\ldots,b_{xK_t}$ are the predictor-domain basis functions, and let  $\bb_s(s)=[b_{s1}(s),\ldots,b_{sK_s}(s)]^T$ where $b_{s1},\ldots,b_{sK_s}$ are the function-domain basis functions.  Define
\[\underbrace{\bB_t}_{(n\times K_t)}=\left[\begin{array}{c}\bb_t(t_1)^T\\\vdots\\\bb_t(t_n)^T\end{array}\right]\mbox{ and }\underbrace{\bB_s}_{(L\times K_s)}=\left[\begin{array}{c}\bb_s(s_1)^T\\\vdots\\\bb_s(s_L)^T\end{array}\right].\]
These two matrices are assumed to be of full rank. 

The temporal penalty matrix $\bP_t$ is a symmetric positive semidefinite $K_t\times K_t$ matrix such that, for a given function
\begin{equation}\label{betabx}g(t)=\bgamma^T\bb_t(t),\end{equation} 
we have $\bgamma^T\bP_t\bgamma=r_t(g)$ where $r_t(g)$ is some measure of the roughness of $g$. We ordinarily use the second-derivative penalty matrix
$\bP_t=[\int b^{\prime\prime}_{ti}(t)b^{\prime\prime}_{tj}(t)dt]_{1\leq i,j\leq K_t}$, for which  $\bgamma^T\bP_t\bgamma=r_t(g)\equiv\int_{\cal T}g^{\prime\prime}(t)^2dt$. 
 Difference penalties \citep{eilers1996}, another popular choice for $B$-spline bases, are somewhat simpler computationally, albeit without a corresponding to a closed-form functional $g\mapsto r_t(g)$.  Analogously, the $K_s\times K_s$  penalty matrix $\bP_s$ is associated with a spatial roughness index $r_s(\cdot)$. Below we shall also require the matrices $\bQ_t=[\int b_{ti}(t)b_{tj}(t)dt]_{1\leq i,j\leq K_t}$ and $\bQ_s=[\int b_{si}(s)b_{sj}(s)ds]_{1\leq i,j\leq K_s}$.

The tensor product of the two bases is the set of functions on ${\cal T}\times{\cal S}$ given by $\{(t,s) \mapsto b_{xi}(t)b_{sj}(s): 1\leq i\leq K_t, 1\leq j \leq K_s\}$.  The span of the tensor product basis comprises all functions of the form
\begin{equation}\label{tpdef}f(t,s)=\sum_{i=1}^{K_t}\sum_{j=1}^{K_s}\theta_{ij}b_{xi}(t)b_{sj}(s)=\bb_t(t)^T\bTheta\bb_s(s)\end{equation} for real-valued coefficients $\theta_{ij}$, where $\bTheta=(\theta_{ij})_{1\leq i\leq K_t, 1\leq j \leq K_s}$. For $f$ of this form, the observed data can be expressed in terms of the matrix equation
\begin{equation}\label{mateq}\bY=\bB_t\bTheta\bB_s^T+\bE,\end{equation}
where $\bE=[\varepsilon_i(s\el)]_{1\leq i \leq n, 1\leq \ell\leq L}$. Letting $\btheta=\vecop(\bTheta)$ and $\bvarepsilon=\vecop(\bE)$, \eqref{mateq} can be written in vector form as
$\by=(\bB_s\otimes\bB_t)\btheta+\bvarepsilon$. 

We shall require two assumptions regarding the spatial smoother: 
\begin{equation} \label{as1}\bB_s\bone_{K_s}=\bone_L.\end{equation}
\begin{equation}\label{as2}\bP_s\bone_{K_s}=\bzero_{K_s}.\end{equation}
These are mild assumptions, inasmuch as \eqref{as1} holds for a $B$-spline basis in one dimension, while \eqref{as2} holds for a derivative or difference penalty. In the sequel we refer repeatedly to ``splines,'' but our development encompasses any penalized basis functions for which \eqref{as1} and \eqref{as2} hold. Finally, let $\bJ_n=\bone_n\bone_n^T/n$.

\section{Tensor product penalty methods}\label{tpp}
In this and the next two sections we present three basic approaches to fitting the varying-smoother model $f$ by estimating $\bTheta$ in equation \eqref{mateq}. 

\subsection{Penalized OLS and penalized GLS}\label{polspgls}
The first approach is to solve \eqref{mateq} directly by penalized bivariate smoothing. A na\"{i}ve estimate of $\bTheta$ is
\begin{equation}\label{pols}\hat{\bTheta}=\argmin_{\Theta}\left[\|\bY-\bB_t\bTheta\bB_s^T\|_F^2+p(\bTheta)\right],\end{equation}
where $\|\cdot\|_F$ denotes the Frobenius norm $\|\bA\|^2_F=\tr(\bA^T\bA)$ and $p(\bTheta)$ is a bivariate roughness penalty, i.e., some nonnegative functional whose value increases with the roughness or wiggliness of the function $(t,s)\mapsto\bb_t(t)^T\bTheta\bb_s(s)$. Inclusion of $p(\bTheta)$ in the objective function serves to prevent overfitting. 

Since in most cases 
\begin{equation}\label{wfd}\cov[y(s_1),\ldots,y(s_L)|x]=\bSigma\neq\bI_L,\end{equation} it may be preferable to use a penalized generalized least squares (GLS) estimate 
\begin{equation}\label{pgls}\hat{\bTheta}=\argmin_{\Theta}\left[\left\|(\bY-\bB_t\bTheta\bB_s^T)\hat{\bSigma}^{-1/2}\right\|_F^2+p(\bTheta)\right],\end{equation}
for some precision (inverse covariance) matrix estimate $\hat{\bSigma}^{-1}$, rather than the penalized ordinary least squares estimate \eqref{pols}. (Since the covariance must be estimated, \eqref{pgls} is more correctly a penalized \emph{feasible} GLS estimate \citep{freedman2009}.)

To implement the penalized OLS estimate \eqref{pols} and the penalized GLS estimate \eqref{pgls}, we must attend to three details: (i) the form of the roughness penalty $p(\bTheta)$,  (ii) estimation of the precision matrix $\bSigma^{-1}$, and (iii) selection of the tuning parameters in $p(\bTheta)$, which govern the smoothness of the function estimate $\hat{f}(t,s)=\bb_t(t)^T\hat{\bTheta}\bb_s(s)$. The first of these issues is taken up in following subsection.  See Appendix~\ref{cbp} regarding precision matrix estimation. For smoothing parameter selection we use restricted maximum likelihood \citep[REML;][]{ruppert2003}; see Appendix~\ref{spsapp} for discussion of this topic. Section~\ref{vds} presents a new adaptive penalty (based on the penalty that we propose in Section~\ref{eew}) that allows greater flexibility in accommodating the varying smoothness of $f(\cdot,s)$ for different $s$. 

\subsection{Exact evaluation of Wood's tensor product penalty}\label{eew}
Although penalized smoothing with tensor product bases is not at all new, the form of $p(\bTheta)$ is still not a settled matter \citep{xiao2013}. 
Tensor product penalization usually builds upon given roughness functionals $r_t$ and $r_s$ for functions of $t$ and $s$ respectively.
 Here we adopt the proposal of \cite{wood2006tensor} to define a tensor product penalty as 
\begin{equation}\label{penf}\mbox{pen}(f)= \lambda_s\int_{\cal T} r_s[f(t,\cdot)]dt + \lambda_t\int_{\cal S} r_t[f(\cdot,s)]ds.\end{equation}
\cite{wood2006tensor} proposes an approximate procedure for computing these integrals, but exact evaluation is possible in our case, as shown by the following result.
 
\begin{thm} \label{tpen} For bivariate functions $f$ of form \eqref{tpdef}, penalty \eqref{penf} can be expressed as
 \begin{equation}\label{wct}\mbox{pen}(f)=\btheta^T\left[\lambda_s(\bP_s\otimes\bQ_t)+\lambda_t(\bQ_s\otimes\bP_t)\right]\btheta.\end{equation}\end{thm}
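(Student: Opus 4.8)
The plan is to treat the two additive terms of \eqref{penf} separately, exploiting the symmetry between them under interchange of the roles of $t$ and $s$, and to reduce each one-dimensional roughness evaluation to a quadratic form via the defining property of the penalty matrices. The key observation is that a one-dimensional slice of a function of form \eqref{tpdef} is again a spline in the corresponding basis, so the matrices $\bP_t$ and $\bP_s$ can be applied directly to its coefficient vector.

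Consider first the second term, $\lambda_t\int_{\cal S} r_t[f(\cdot,s)]\,ds$. For each fixed $s$, the slice $t\mapsto f(t,s)=\bb_t(t)^T\bTheta\bb_s(s)$ is a function of the form \eqref{betabx} whose coefficient vector in the $\bb_t$-basis is $\bgamma=\bTheta\bb_s(s)$. Hence the defining property $\bgamma^T\bP_t\bgamma=r_t(g)$ gives immediately
\[ r_t[f(\cdot,s)]=\bb_s(s)^T\bTheta^T\bP_t\bTheta\bb_s(s). \]
Since the right-hand side is a scalar, I would rewrite it as a trace and move the integral inside, using linearity of the trace together with the definition $\bQ_s=\int_{\cal S}\bb_s(s)\bb_s(s)^T\,ds$, to obtain
\[ \int_{\cal S} r_t[f(\cdot,s)]\,ds=\tr\bigl[\bTheta^T\bP_t\bTheta\,\bQ_s\bigr]. \]

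The final step is to recognize this trace as a quadratic form in $\btheta=\vecop(\bTheta)$. Using the standard identities $\vecop(\bA\bX\bB)=(\bB^T\otimes\bA)\vecop(\bX)$ and $\tr(\bA^T\bB)=\vecop(\bA)^T\vecop(\bB)$, together with the symmetry of $\bQ_s$, this trace equals $\btheta^T(\bQ_s\otimes\bP_t)\btheta$, which is the second summand of \eqref{wct}. The first term, $\lambda_s\int_{\cal T} r_s[f(t,\cdot)]\,dt$, is handled identically after observing that the slice $s\mapsto f(t,s)$ has coefficient vector $\bTheta^T\bb_t(t)$ in the $\bb_s$-basis; the same manipulation then yields $\tr[\bTheta\bP_s\bTheta^T\bQ_t]=\btheta^T(\bP_s\otimes\bQ_t)\btheta$. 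Adding the two contributions gives \eqref{wct}.

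The computation is essentially mechanical, and it uses nothing beyond the definitions of the four matrices $\bP_t,\bP_s,\bQ_t,\bQ_s$; in particular, assumptions \eqref{as1} and \eqref{as2} are not needed for this result. The only place demanding care is the passage from the trace expressions to the Kronecker-product quadratic forms, where one must track the transposes and the left/right order of the factors so that the symmetric matrices land in the intended positions of each Kronecker product; the cyclic invariance of the trace and the symmetry of $\bP$ and $\bQ$ make these rearrangements legitimate.
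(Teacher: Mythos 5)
Your proof is correct and follows essentially the same route as the paper's: identify the coefficient vector of each one-dimensional slice, express the roughness as a quadratic form, pull the integral inside a trace to produce $\bQ_s$ (resp.\ $\bQ_t$), and convert the resulting trace to a Kronecker-product quadratic form via the standard $\vecop$/trace identities. Your added remark that assumptions \eqref{as1} and \eqref{as2} are not needed here is also accurate.
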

 
 The proof is given in Appendix~\ref{pt1}. Note that since splines are piecewise polynomials, the integrals defining $\bQ_t,\bQ_s$, as well as $\bP_t,\bP_s$ for derivative penalties, can be evaluated exactly by Newton-Cotes quadrature \citep[e.g.,][]{ralston2001}. Taking $p(\bTheta)$ to be penalty \eqref{wct}, we can express the penalized OLS estimate \eqref{pols} as the vector 
 \begin{equation}\label{grr}\hat{\btheta}=\argmin_{\theta}\left[\|\by-(\bB_s\otimes\bB_t)\btheta\|^2+\btheta^T\{\lambda_s(\bP_s\otimes\bQ_t)+\lambda_t(\bQ_s\otimes\bP_t)\}\btheta\right],\end{equation}
and the penalized GLS estimate \eqref{pgls} as
\begin{eqnarray}\hat{\btheta}&=&\argmin_{\theta}\left[\{\by-(\bB_s\otimes\bB_t)\btheta\}^T(\hat{\bSigma}^{-1}\otimes \bI_n)\{\by-(\bB_s\otimes\bB_t)\btheta\}\right.\nonumber\\&&\qquad\qquad\left.+\btheta^T\{\lambda_s(\bP_s\otimes\bQ_t)+\lambda_t(\bQ_s\otimes\bP_t)\}\btheta\right].\label{pglsvec}\end{eqnarray} 
We remark that, if both the $t$- and the $s$-basis are orthonormal, \eqref{wct} reduces to the penalty
 $\btheta^T\left[\lambda_s(\bP_s\otimes\bI_{K_t})+\lambda_t(\bI_{K_s}\otimes\bP_t)\right]\btheta$
used, for example, by \cite{eilers2003} and \cite{currie2006}.

\subsection{Adaptive (spatially varying) temporal smoothing}\label{vds}
A generalization of the temporal penalty $\lambda_t\int_{\cal S} r_t[f(\cdot,s)]ds$ is 
\begin{equation}\label{vdpen}\int_{\cal S} \lambda_t(s)r_t[f(\cdot,s)]ds,\end{equation}
 i.e., allowing the temporal smoothing parameter $\lambda_t$ to vary with $s$. This is a natural way to enable our estimates to adapt to varying smoothness of $f(\cdot, s)$, with respect to $t$, for different $s$. A relatively straightforward way to incorporate such a smoothly varying $\lambda_t$ is to assume that
\begin{equation}\label{smallbasis}\lambda_t(s)=\sum_{k=1}^{K_s^*}\lambda_{t,k} b^*_k(s)\end{equation}
for some $\lambda_{t,1},\ldots,\lambda_{t,K_s^*}\geq 0$, where $b^*_1,\ldots,b^*_{K_s^*}$ form a coarse $B$-spline basis on domain $\cal S$.
 Penalty \eqref{vdpen} then becomes $\sum_{k=1}^{K_s^*}\lambda_{t,k}\int_{\cal S} b^*_k(s)r_t[f(\cdot,s)]ds$, giving the modified tensor product penalty
\begin{equation}\label{modpen} \lambda_s\int_{\cal T} r_s[f(t,\cdot)]dt+\sum_{k=1}^{K_s^*}\lambda_{t,k}\int_{\cal S} b^*_k(s)r_t[f(\cdot,s)]ds \end{equation} [cf.\ \eqref{penf}].
This penalty is expressed as a quadratic form in the following result, which is proved in Appendix~\ref{pt1}. 

\begin{thm} \label{vdtpen} For bivariate functions $f$ of form \eqref{tpdef}, penalty \eqref{modpen} equals
\begin{equation}\label{ncpen}\btheta^T\left[\lambda_s(\bP_s\otimes\bQ_t)+\sum_{k=1}^{K_s^*}\lambda_{t,k}(\bQ^{b^*_k}_s\otimes\bP_t)\right]\btheta.\end{equation}
where $\bQ^{b^*_k}_s=[\int b^*_k(s)b_{si}(s)b_{sj}(s)ds]_{1\leq i,j\leq K_s}$. \end{thm}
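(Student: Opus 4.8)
The plan is to exploit the fact that penalty \eqref{modpen} differs from \eqref{penf} only in its temporal term, so that Theorem~\ref{tpen} can be invoked directly for the spatial part. Indeed the first summand $\lambda_s\int_{\cal T}r_s[f(t,\cdot)]dt$ is identical in the two penalties, and Theorem~\ref{tpen} already shows it equals $\btheta^T\lambda_s(\bP_s\otimes\bQ_t)\btheta$. It therefore remains only to handle $\sum_{k=1}^{K_s^*}\lambda_{t,k}\int_{\cal S}b^*_k(s)r_t[f(\cdot,s)]ds$, and by linearity it suffices to prove, for each fixed $k$, that $\int_{\cal S}b^*_k(s)r_t[f(\cdot,s)]ds=\btheta^T(\bQ^{b^*_k}_s\otimes\bP_t)\btheta$.

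For this I would first fix $s$ and view $f(\cdot,s)$ as a univariate spline in $t$. From \eqref{tpdef}, $f(t,s)=\bb_t(t)^T\bTheta\bb_s(s)$, so in the $t$-basis this function has coefficient vector $\bgamma_s=\bTheta\bb_s(s)$, and the defining property \eqref{betabx} of $\bP_t$ gives $r_t[f(\cdot,s)]=\bgamma_s^T\bP_t\bgamma_s=\bb_s(s)^T\bTheta^T\bP_t\bTheta\bb_s(s)$. Since this is a scalar, I would write it as a trace and pass the $s$-integral through the (linear) trace operator:
\begin{align*}
\int_{\cal S}b^*_k(s)\,r_t[f(\cdot,s)]\,ds
&=\int_{\cal S}b^*_k(s)\,\tr\!\left[\bTheta^T\bP_t\bTheta\,\bb_s(s)\bb_s(s)^T\right]ds\\
&=\tr\!\left[\bTheta^T\bP_t\bTheta\int_{\cal S}b^*_k(s)\,\bb_s(s)\bb_s(s)^T\,ds\right]
=\tr\!\left[\bTheta^T\bP_t\bTheta\,\bQ^{b^*_k}_s\right],
\end{align*}
the last step being merely the entrywise definition of $\bQ^{b^*_k}_s$. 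Converting this trace to a quadratic form in $\btheta=\vecop(\bTheta)$ via the identities $\tr(\bA^T\bB)=\vecop(\bA)^T\vecop(\bB)$ and $\vecop(\bP_t\bTheta\bQ^{b^*_k}_s)=\big((\bQ^{b^*_k}_s)^T\otimes\bP_t\big)\btheta$, and using the symmetry of $\bQ^{b^*_k}_s$, then yields the desired $\btheta^T(\bQ^{b^*_k}_s\otimes\bP_t)\btheta$. Summing over $k$ with weights $\lambda_{t,k}$ and adding the spatial contribution supplied by Theorem~\ref{tpen} gives \eqref{ncpen}.

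The argument is essentially the temporal half of the proof of Theorem~\ref{tpen}, with the constant weight replaced by $b^*_k(s)$; the extra factor simply converts the Gram matrix $\bQ_s=\int_{\cal S}\bb_s(s)\bb_s(s)^T\,ds$ into the weighted version $\bQ^{b^*_k}_s$. I expect the only delicate point to be the vec/trace bookkeeping in the final step: one must respect the column-stacking convention $\by=\vecop(\bY)$ used throughout, keep the Kronecker factors in the correct order, and note that the transpose appearing in $\vecop(\bA\bX\bB)=(\bB^T\otimes\bA)\vecop(\bX)$ is harmless here precisely because $\bQ^{b^*_k}_s$ is symmetric. Everything else is routine linearity of the integral and trace.
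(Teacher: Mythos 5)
Your proof is correct and is essentially the paper's own argument: the paper likewise reduces the claim to showing $\int_{\cal S}b^*_k(s)\,r_t[f(\cdot,s)]\,ds=\tr(\bQ^{b^*_k}_s\bTheta^T\bP_t\bTheta)=\btheta^T(\bQ^{b^*_k}_s\otimes\bP_t)\btheta$ by inserting the weight $b^*_k(s)$ into the trace/integral interchange from the proof of Theorem~\ref{tpen}, and your vec/Kronecker bookkeeping at the end matches the identities used there. No gaps.
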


Theorem~\ref{vdtpen} shows that we can let the smoothness of $f(\cdot,s)$ vary more flexibly with $s$ by solving another quadratically penalized least squares problem, but now with $K_s^*+1$ smoothing parameters instead of 2. Note that, since $\sum_{k=1}^{K_s^*}b^*_k(s)=1$ for all $s$, we have $\sum_{k=1}^{K_s^*}\bQ^{b^*_k}_s=\bQ_s$. Thus if $\lambda_{t,1}=\ldots=\lambda_{t,K_s^*}=\lambda_t$ then \eqref{ncpen} reverts to the constant-smoothing-parameter penalty \eqref{wct}.

\section{A smoothed functional principal component scores method}\label{fpc}
We consider next adapting a method of \cite{chiou2003} to the problem of varying-smoother modeling. These authors propose a single-index approach to modeling smooth dependence of functional responses on a set of scalar predictors. They assume that the functional responses arise from a stochastic process with a finite-dimensional Karhunen-Lo\`{e}ve or functional principal component (FPC) expansion, i.e., the $i$th response can be expressed uniquely as 
\begin{equation}\label{kle}y_i(s)=\mu(s)+\sum_{a=1}^A c_{ia}\phi_a(s)\end{equation} where $\mu(\cdot)$ is the mean function, $\phi_1(\cdot),\ldots,\phi_A(\cdot)$ are the leading principal component functions and  $c_{i1},\ldots,c_{iA}$ are the corresponding scores. In the present paper we are considering a single scalar predictor $t$, for which the proposed model of \cite{chiou2003} reduces to 
\begin{equation}\label{chioumod}E[y(s)|x]=\mu(s)+\sum_{a=1}^A g_a(t)\phi_a(s),\end{equation}
for some smooth functions $g_1,\ldots,g_A:{\cal T}\longrightarrow\mathbb{R}$. These $A$ functions can be estimated separately by smoothing the corresponding estimated FPC scores. Thus we fit model \eqref{chioumod} in two steps:
\begin{enumerate}
\item Derive estimates $\hat{\mu}(\cdot),\hat{c}_{ia},\hat{\phi}_a(\cdot)$ ($i=1,\ldots,n,a=1,\ldots,A$) of the unknowns in \eqref{kle}.
\item For $a=1,\ldots,A$, apply nonparametric regression to the ``data'' $(t_1,\hat{c}_{1a}),\ldots, (t_n,\hat{c}_{na})$ to obtain an estimate $\hat{g}_a$ of $g_a$.
\end{enumerate}
\cite{chiou2003} estimate the model by local linear smoothing, but note that splines can be used as well. In Appendix~\ref{compchiou} we outline a penalized spline implementation that produces an estimate of the coefficient matrix $\bTheta$ in \eqref{mateq}.

\section{Two-step methods}\label{2s}
Modeling approaches of the third and final type that we consider proceed by (1) obtaining an initial estimate $\tilde{f}\el$ of $f(\cdot,s\el)$, separately for each $\ell=1,\ldots,L$; and (2) a ``postprocessing'' step that combines these function estimates into an estimate of $f$, via smoothing and/or projection. A similar two-step scheme was developed by \cite{fan2000} for varying-coefficient models with functional responses. We discuss each of the two steps in turn.

\subsection{Step 1}
In the first step we obtain, for $\ell=1,\ldots,L$, a standard penalized spline estimate $\tilde{f}\el(\cdot)=\tilde{\bxi}\el^T\bb_t(\cdot)$ where 
\begin{eqnarray*}\tilde{\bxi}\el&=&\argmin_{\xi\in{{\mathbb R}^{K_t}}}\left(\|\by_{\cdot \ell}-\bB_t\bxi\|^2+\lambda_{t \ell}\bxi^T\bP_t\bxi\right) \\
&=&(\bB_t^T\bB_t+\lambda_{t \ell}\bP_t)^{-1}\bB_t^T \by_{\cdot\ell}.\end{eqnarray*}
The smoothing parameter $\lambda_{t \ell}$ is allowed to vary with $\ell$, to adapt to varying smoothness of $f(\cdot,s)$ for different $s$. \cite{reiss2014} propose a fast algorithm for choosing the optimal $\lambda_{t \ell}$, in the sense of the REML criterion, for each $\ell=1,\ldots,L$ with large $L$. They also derive a useful expression for the fitted value matrix $\tilde{\bY}=[\tilde{f}\el(t_i)]_{1\leq i\leq n,1\leq\ell\leq L}$ by means of Demmler-Reinsch orthogonalization, as follows.
First find a $K_t\times K_t$ matrix $\bR_t$ such that $\bR_t^T\bR_t=\bB_t^T\bB_t$, e.g., by Cholesky decomposition.   Define $\bU_t\mbox{Diag}(\btau)\bU_t^T$, where $\btau=(\tau_1,\ldots,\tau_{K_t})^T$, as the singular value decomposition of $\bR_t^{-T}\bP_t\bR_t^{-1}$.  We then have  
\begin{equation}\tilde{\bY}=\bA_t[\bM\odot(\bA_t^T\bY)]\label{ytilde}\end{equation}
where $\odot$ denotes Hadamard (componentwise) product, 
\begin{eqnarray}\label{defm}\bM&=&\left(\frac{1}{1+\lambda_{t \ell} \tau_k}\right)_{1\leq k\leq K_t, 1\leq \ell\leq L},\mbox{ and}
\\ \label{defa}\bA_t&=&\bB_t\bR_t^{-1}\bU_t.\end{eqnarray}

\subsection{Step 2}
We consider three variants of step 2, in which we refine the initial set of pointwise smoothers.

\subsubsection{Penalized variant}\label{penvt}
The simplest variant is to apply a spatial smoother, given by some $L\times L$ matrix $\bH_s$, to each of the rows $\tilde{\by}_{1\cdot},\ldots,\tilde{\by}_{n\cdot}$ of the initial fitted value matrix $\tilde{\bY}$. By \eqref{ytilde}, this results in the final fitted values
\begin{equation}\label{fit2}\hat{\bY}=\tilde{\bY}\bH_s^T=\bA_t[\bM\odot(\bA_t^T\bY)]\bH_s^T.\end{equation}
  In particular, using the standard penalized basis smoother $\bH_s=\bB_s(\bB_s^T\bB_s+\lambda_s\bP_s)^{-1}\bB_s^T$, \eqref{fit2} implies the function estimate
\begin{equation}\label{gest}\hat{f}(t,s)=\bb_t(t)^T\bR_t^{-1}\bU_t[\bM\odot(\bA_t^T\bY)]\bB_s(\bB_s^T\bB_s+\lambda_s\bP_s)^{-1}\bb_s(s),\end{equation}
which has the tensor product form (\ref{tpdef}).

\subsubsection{FPC variant}\label{fpcvt}
  A possible disadvantage of performing step~2 by simple spatial smoothing is that it fails to take advantage of the patterns of variation in the responses as revealed by functional PCA.
  An alternative for step~2 is to project $\tilde{\by}_{1\cdot}-\hat{\bmu},\ldots,\tilde{\by}_{n\cdot}-\hat{\bmu}$ onto the span of $\hat{\bphi}_1,\ldots,\hat{\bphi}_A$ for some $A$, where $\hat{\bmu},\hat{\bphi}_a\in\mathbb{R}^L$ are discretized estimates of the mean function and the $a$th FPC function, respectively. Appendix~\ref{fpcvs} provides the details.

\subsubsection{Penalized FPC variant}\label{penfpcvt}
For $i=1,\ldots,n$, the penalized variant of step~2 refines the initial fitted values $\tilde{f}_1(t_i),\ldots,\tilde{f}_L(t_i)$ by applying a penalized smoother to them. The FPC variant, on the other hand, projects these values (after centering them) onto the span of the leading FPC functions. As shown in Appendix~\ref{fpcvs}, it is straightforward to combine these two approaches to postprocessing. \cite{reiss2007} found that a similar hybrid of FPC expansion and roughness penalization worked well for regressing scalar responses on functional predictors.

\section{Pointwise degrees of freedom}\label{pwdf}
Varying-smoother models seek to estimate the smooth bivariate function $f$ while allowing for differing smoothness or complexity of the function $f(\cdot,s)$ for different $s$. In this section we introduce a notion of pointwise degrees of freedom that quantifies the model complexity of an estimate of $f(\cdot,s)$. 

\subsection{Definition}
Consider first the matrix $\mathbf{\cal H}$ such that $\tilde{\by}=\left(\begin{array}{c}\tilde{\by}_{\cdot 1}\\\vdots\\\tilde{\by}_{\cdot L}\end{array}\right)\equiv\vecop(\tilde{\bY})$, the concatenation of the $L$ separate smooths produced in step 1 of the two-step method, is given by $\tilde{\by}=\mathbf{\cal H}\by$.  Referring to the block form \eqref{blockhat} of the hat matrix, we have, for $\ell=1,\ldots,L$:
 \begin{enumerate}[(a)]
 \item  $\mathbf{\cal H}_{\ell\ell}=\bB_t(\bB_t^T\bB_t+\lambda_{t \ell}\bP_t)^{-1}\bB_t^T$;
 \item $\mathbf{\cal H}_{\ell\ell^*}=0$ for each $\ell^*\neq \ell$;
 \item $\tilde{\by}_{\cdot \ell}=\mathbf{\cal H}_{\ell\ell}\by_{\cdot \ell}$.
 \end{enumerate}
In this case there is no need for a novel definition of pointwise degrees of freedom: the df 
of the $\ell$th-location model can be defined in the conventional manner \citep{buja1989}, as 
\begin{equation}\label{odf}\tr(\mathbf{\cal H}_{\ell\ell})=\sum_{i=1}^n h_{(\ell\ell)ii}=\sum_{i=1}^n\frac{\partial \tilde{y}_{i\ell}}{\partial y_{i\ell}}.\end{equation} 

 On the other hand, when a smoothing procedure shares information across locations, the fitted values $\hat{y}_{i\ell}$ depend on the entire functional response datum $\by_{i\cdot}$ rather than solely on its $\ell$th component $y_{i\ell}$. The standard definition of df is then inadequate.  We therefore propose the following generalization. 
\begin{defn}\label{pwdfdef} The \emph{pointwise effective degrees of freedom} at location $\ell$ is 
\begin{equation}\label{dff}d\el=\sum_{i=1}^n\sum_{\ell^*=1}^L\frac{\partial \hat{y}_{i\ell}}{\partial y_{i\ell^*}}=\sum_{\ell^*=1}^L\tr(\mathbf{\cal H}_{\ell\ell^*}).\end{equation}
We shall use $\bd=(d_1,\ldots,d_L)^T$ as a generic notation for the vector of pointwise df values obtained by any of the methods discussed below.\end{defn}
   The above definition implies
\begin{equation}\label{dfh}d\el=\tr\left[(\be\el^T\otimes\bI_n)\mathbf{\cal H}(\bone_L\otimes\bI_n)\right],\end{equation}
where $\be\el$ is the $L$-dimensional vector with 1 in the $\ell$th position and 0 elsewhere.

Some intuition for Definition~\ref{pwdfdef} can be gained from \figref{hatfig}. Subfigure~(a) displays a portion of the block hat matrix \eqref{blockhat} obtained by fitting a tensor product penalty smooth (as in Section~\ref{tpp}) to a subset of the corpus callosum data. Had we fitted separate models at each voxel, the nonzero entries in the hat matrix---representing influence of the responses on the fitted values---would be confined to diagonal blocks such as those outlined in black. The sharing of information across locations is expressed as a ``blockwise blurring'' in the horizontal direction, which serves as the motivation for Definition~\ref{pwdfdef}. Consider a toy example with $n=5$ observations and $L=4$ locations, so that the hat matrix \eqref{blockhat} comprises a $4\times 4$ grid of $5\times 5$ blocks. If separate models are fitted at each location, the ordinary df for the 2nd-location model is the sum of the shaded values in \figref{hatfig}(b). The proposed pointwise df for the 2nd location, which takes into account the influence of neighboring locations in a varying-smoother model, is the sum of the shaded values in  \figref{hatfig}(c).
\begin{figure}\centering
\includegraphics[width=.85\textwidth]{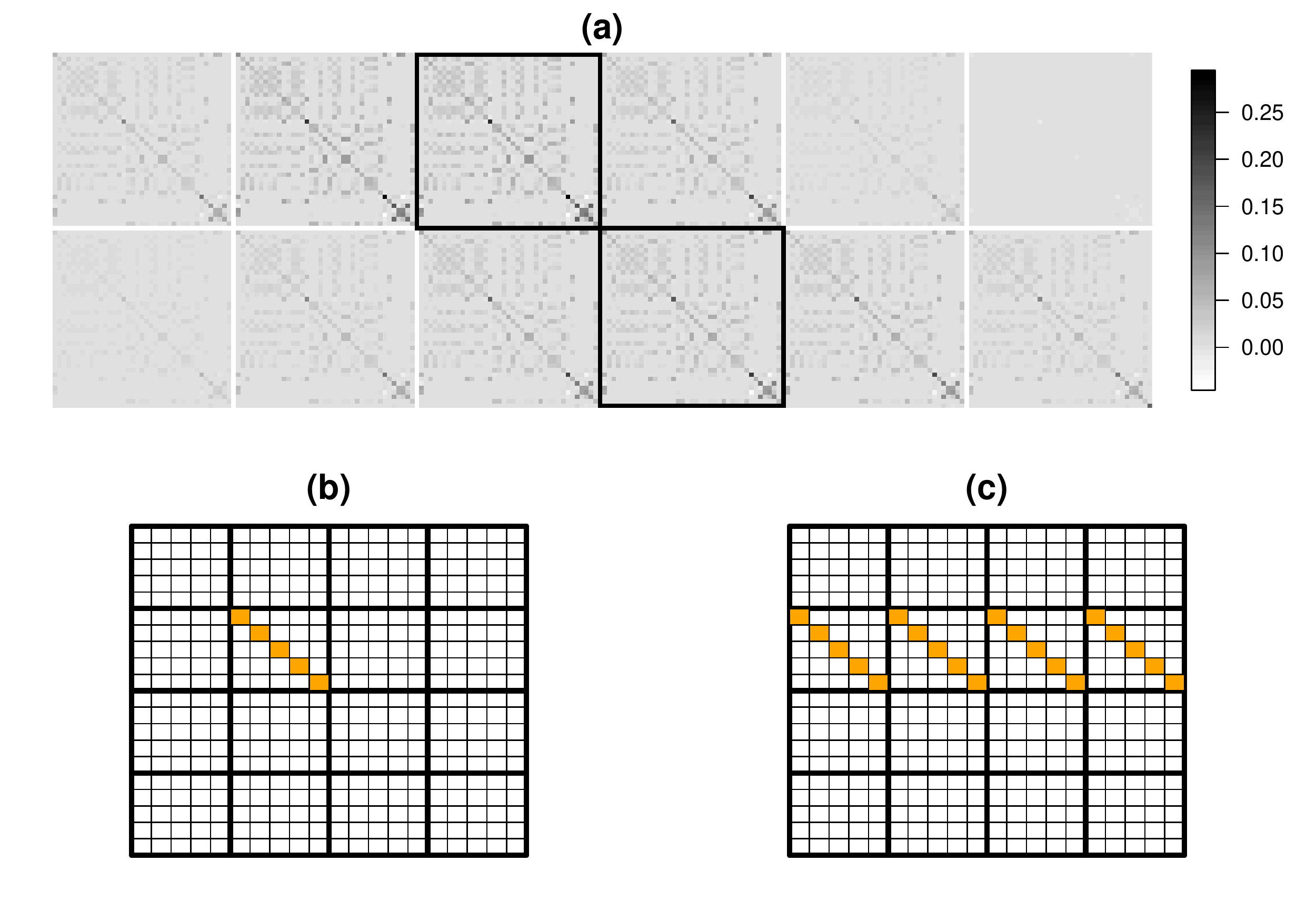} 
\caption{(a) Excerpt from the block hat matrix \eqref{blockhat} for a fit to the corpus callosum data, with diagonal blocks outlined in black. (b) Schematic illustration of the usual definition of df for the 2nd location, in a toy example with separate models at each of $L=4$ locations. (c) Proposed pointwise df for the 2nd location.}\label{hatfig}
\end{figure}

We can similarly define the \emph{pointwise leverage} of the $i$th observation at location $\ell$ as 
\[\sum_{\ell^*=1}^L\frac{\partial \hat{y}_{i\ell}}{\partial y_{i\ell^*}}=\sum_{\ell^*=1}^L h_{(\ell\ell^*)ii}.\]
This generalizes the ordinary leverage $h_{(\ell\ell)ii}$ for the $i$th observation in the $\ell$th-location model, given separate models for the $L$ locations. Pointwise leverage could be used to detect influential observations in functional-response regression, but we do not pursue this here. 

\subsection{Application to varying-coefficient models}\label{avc}
It must be acknowledged that our definition of pointwise df is not the only conceivable generalization of \eqref{odf} to account for sharing information across locations; and indeed it is not obvious how one might confirm that ours is the ``correct'' generalization. In this section, we provide a form of validation for Definition~\ref{pwdfdef}: namely, we show that it leads to the intuitively correct value in the case of varying-coefficient models.

Suppose we are given $n$ functional responses as in \eqref{thedata}, but the $i$th observation includes a predictor vector $\bx_i\in\mathbb{R}^p$ with $p<n$. Assume the functional responses arise from the varying-coefficient model
\begin{equation}\label{fosr}y_i(s)=\bx_i^T\bbeta(s)+\varepsilon_i(s)\end{equation}
\citep{ramsay2005}, also known as a ``function-on-scalar'' linear regression model \citep{reiss2010}. This setup is more general than that of the rest of the paper insofar as we are allowing multiple predictors; on the other hand, for $\bx_i=(1,t_i)^T$, our model is the restriction of \eqref{themod} to the case in which $f(t,s)$ is linear in $t$.
Let $\bX$ be the $n\times p$ design matrix with $i$th row $\bx_i^T$, and assume $\bbeta(s)=\bTheta\bb_s(s)$ where $\bb_s(s)$ is as in Section~\ref{setup} but now $\bTheta=(\theta_{ij})_{1\leq i\leq p, 1\leq j \leq K_s}$. Then \eqref{fosr} can be written in matrix form as $\bY=\bX\bTheta\bB_s^T+\bE$
[cf.\ \eqref{mateq}]. Letting $\btheta=\vecop(\bTheta)$ as before, we can estimate this coefficient vector by penalized GLS  as
\begin{eqnarray}\hat{\btheta}&=&\argmin_{\theta}\left[\{\by-(\bB_s\otimes\bX)\btheta\}^T(\hat{\bSigma}^{-1}\otimes \bI_n)\{\by-(\bB_s\otimes\bX)\btheta\}\right.\nonumber\\&&\qquad\qquad\qquad\qquad\qquad\left.+\btheta^T(\bP_s\otimes\bLambda)\btheta\right],\label{vccrit}\end{eqnarray}
\citep{ramsay2005,reiss2010}, where $\hat{\bSigma}^{-1}$ is a precision matrix estimate as in Section~\ref{polspgls} and $\bLambda=\mbox{Diag}(\lambda_1,\ldots,\lambda_p)$ [cf.\ \eqref{pglsvec}]. Perhaps more transparently, the penalty in \eqref{vccrit} can be written as $\sum_{k=1}^p\lambda_k\btheta_{k\cdot}^T\bP_s\btheta_{k\cdot}$ where $\btheta_{k\cdot}^T$ is the $k$th row of $\bTheta$, i.e., as the sum of separate penalties for the $p$ coefficient functions $\beta_k(s)=\btheta_{k\cdot}^T\bb_s(s)$, $k=1,\ldots,p$. The penalized OLS fit can be viewed as a special case of \eqref{vccrit} with $\hat{\bSigma}=\bI_L$.

The solution to \eqref{vccrit} yields, for given $s$, the mapping
\[\bx\mapsto E[y(s)|\bx]=\bx^T\hat{\bbeta}(s)=\bx^T\hat{\bTheta}\bb_s(s),\]
which is linear in $\bx$.
Intuitively, then, the pointwise df should equal the df of an ordinary linear regression with the same design matrix.   The following result shows that, under mild assumptions, Definition~\ref{pwdfdef} agrees with this expectation.
\begin{thm}\label{pwdfvc} Assume that $\bX$ is of rank $p$ and that \eqref{as1} and \eqref{as2} hold. Let $\mathbf{\cal H}$ be the hat matrix such that 
\begin{equation}\label{hvc}\hat{\by}=(\bB_s\otimes\bX)\hat{\btheta}=\mathbf{\cal H}\by,\end{equation}
where $\hat{\btheta}$ is given by \eqref{vccrit}. Then the pointwise df \eqref{dff} equals $d\el=p$ for $\ell=1,\ldots,L$.\end{thm}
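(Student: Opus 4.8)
The plan is to write the hat matrix $\mathbf{\cal H}$ in closed form and substitute it into the pointwise-df expression \eqref{dfh}. Differentiating the penalized GLS criterion \eqref{vccrit} gives the normal equations, so with the shorthand $\bZ=\bB_s\otimes\bX$ and $\bW=\hat{\bSigma}^{-1}\otimes\bI_n$ the solution is $\hat{\btheta}=(\bZ^T\bW\bZ+\bP_s\otimes\bLambda)^{-1}\bZ^T\bW\by$, whence $\hat{\by}=\bZ\hat{\btheta}$ identifies $\mathbf{\cal H}=\bZ(\bZ^T\bW\bZ+\bP_s\otimes\bLambda)^{-1}\bZ^T\bW$. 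Using the mixed-product rule $(\bA\otimes\bB)(\bC\otimes\bD)=\bA\bC\otimes\bB\bD$, the cross-product term simplifies to $\bZ^T\bW\bZ=\bG\otimes\bX^T\bX$ with $\bG=\bB_s^T\hat{\bSigma}^{-1}\bB_s$, so I would set $\bN=\bG\otimes\bX^T\bX+\bP_s\otimes\bLambda$ and record $\mathbf{\cal H}=\bZ\bN^{-1}\bZ^T\bW$. Next I would reduce the two outer factors in \eqref{dfh}: the same mixed-product rule gives $(\be\el^T\otimes\bI_n)\bZ=\bb_s(s\el)^T\otimes\bX$ (since $\be\el^T\bB_s$ is the $\ell$th row of $\bB_s$) and $\bZ^T\bW(\bone_L\otimes\bI_n)=\bB_s^T\hat{\bSigma}^{-1}\bone_L\otimes\bX^T$. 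Here the first assumption enters: by \eqref{as1}, $\bone_L=\bB_s\bone_{K_s}$, so $\bB_s^T\hat{\bSigma}^{-1}\bone_L=\bG\bone_{K_s}$, giving $d\el=\tr[(\bb_s(s\el)^T\otimes\bX)\,\bN^{-1}(\bG\bone_{K_s}\otimes\bX^T)]$.

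The crux of the argument, and the step I expect to be the main obstacle, is to evaluate $\bN^{-1}(\bG\bone_{K_s}\otimes\bX^T)$ without inverting $\bN$ explicitly; this is where the second assumption does the essential work. For any $\bv\in\mathbb{R}^p$, applying $\bN$ to $\bone_{K_s}\otimes\bv$ gives $\bN(\bone_{K_s}\otimes\bv)=(\bG\bone_{K_s}\otimes\bX^T\bX\bv)+(\bP_s\bone_{K_s}\otimes\bLambda\bv)$, and by \eqref{as2} the penalty term vanishes because $\bP_s\bone_{K_s}=\bzero_{K_s}$. Hence $\bN^{-1}(\bG\bone_{K_s}\otimes\bX^T\bX\bv)=\bone_{K_s}\otimes\bv$; choosing $\bv=(\bX^T\bX)^{-1}\bX^T\bw$ (valid since $\bX^T\bX$ is invertible under the rank-$p$ hypothesis) so that $\bX^T\bX\bv=\bX^T\bw$, and letting $\bw$ range over the columns of $\bI_n$, promotes this to the matrix identity $\bN^{-1}(\bG\bone_{K_s}\otimes\bX^T)=\bone_{K_s}\otimes(\bX^T\bX)^{-1}\bX^T$. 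The conceptual content here is that the ``constant-in-$s$'' direction $\bone_{K_s}\otimes\bv$ is annihilated by the penalty, so along it the fit behaves like unpenalized least squares on $\bX$.

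Substituting this identity and applying the mixed-product rule once more yields $d\el=\tr[\,\bb_s(s\el)^T\bone_{K_s}\otimes\bX(\bX^T\bX)^{-1}\bX^T\,]$. The scalar $\bb_s(s\el)^T\bone_{K_s}$ is the $\ell$th entry of $\bB_s\bone_{K_s}=\bone_L$, which equals $1$ by \eqref{as1} again, so $d\el=\tr[\bX(\bX^T\bX)^{-1}\bX^T]=\tr(\bI_p)=p$ for every $\ell$, as claimed. I would close by remarking that this value is independent of $\hat{\bSigma}$ and of the smoothing parameters in $\bLambda$, exactly as one expects for the degrees of freedom of a linear regression on the design matrix $\bX$, which is the validation of Definition~\ref{pwdfdef} that the theorem is meant to provide.
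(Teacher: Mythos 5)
Your proposal is correct and follows essentially the same route as the paper's proof: both write out the hat matrix from the penalized GLS normal equations, use \eqref{as1} to replace $\bone_L$ by $\bB_s\bone_{K_s}$, use \eqref{as2} to show the penalty annihilates the constant-in-$s$ direction $\bone_{K_s}\otimes(\bX^T\bX)^{-1}\bX^T$ so that the inverse of the penalized cross-product matrix can be evaluated in closed form on the relevant factor, and then use \eqref{as1} once more to reduce the trace to $\tr[\bX(\bX^T\bX)^{-1}\bX^T]=p$. The only difference is presentational: the paper pushes the factor $(\bB_s^T\hat{\bSigma}^{-1}\bone_L)\otimes\bX^T$ forward through the penalized matrix, while you invert that matrix on vectors of the form $\bone_{K_s}\otimes\bv$; these are the same computation read in opposite directions.
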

The proof appears in Appendix~\ref{pwdfapp}. Having thus validated Definition~\ref{pwdfdef} for the case of varying-coefficient models, we turn next to evaluating the pointwise df of our estimators for varying-smoother models.

\subsection{Application to varying-smoother models}
Whereas the pointwise df is the same for $s_1,\ldots,s_L$ for varying-coefficient models, it varies with $s$ for varying-smoother models, and therefore can serve as a measure of the complexity of the fit $x\mapsto\hat{f}(\cdot,s)$ for different $s$. It is not self-evident from Definition~\ref{pwdfdef} that we can efficiently compute all $L$ pointwise df values, as opposed to computing $d_1,\ldots,d_L$ individually. But we now show that, for each of the methods of Sections~\ref{tpp} through \ref{2s}, there is indeed a readily computable expression for the entire pointwise df vector.
Recall that the pointwise df depends on the hat matrix $\mathbf{\cal H}$, which in turn is defined by $\hat{\by}=\vecop(\hat{\bY})=\mathbf{\cal H}\by$. Each of the following results provides the pointwise df vector $\bd$ corresponding to the given fitted values.

\begin{thm} \label{pp} \emph{(Pointwise df for tensor product penalty method)}
 Let $\hat{\by}=(\bB_s\otimes\bB_t)\hat{\btheta}$
where $\hat{\btheta}$ is the penalized OLS estimate \eqref{grr}, the penalized GLS estimate \eqref{pglsvec}, or the adaptively penalized OLS or GLS estimate in which \eqref{wct} is replaced by \eqref{ncpen}. Let
\[{\cal P}=\left\{\begin{array}{ll}\lambda_s(\bP_s\otimes\bQ_t)+\lambda_t(\bQ_s\otimes\bP_t), & \mbox{for penalty \eqref{wct}};\\\lambda_s(\bP_s\otimes\bQ_t)+\sum_{k=1}^{K_s^*}\lambda_{t,k}(\bQ^{b^*_k}_s\otimes\bP_t), & \mbox{for the adaptive penalty \eqref{ncpen}},\end{array}\right.\]
and let \[{\cal M}=\left\{\begin{array}{ll}{[(\bone_L^T \bB_s) \otimes \bB_t]} \left[(\bB_s^T\bB_s)\otimes (\bB_t^T \bB_t)+{\cal P}\right]^{-1}, & \mbox{for penalized OLS}; \\  {[(\bone_L^T\hat{\bSigma}^{-1} \bB_s) \otimes \bB_t]} \left[(\bB_s^T\hat{\bSigma}^{-1}\bB_s)\otimes (\bB_t^T \bB_t)+{\cal P}\right]^{-1}, & \mbox{for penalized GLS}.\end{array}\right.\]
Then the pointwise df vector is
\begin{equation}\label{tpd}\bd=(\bI_L\otimes\bone_n^T)[(\bB_s\otimes\bB_t)\odot(\bone_L\otimes{\cal M})]\bone_{K_s K_t}.\end{equation}  
\end{thm}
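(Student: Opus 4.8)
The plan is to write down the hat matrix $\mathbf{\cal H}$ explicitly, substitute it into the pointwise-df formula \eqref{dfh}, and then collapse the resulting trace into the Hadamard-product form \eqref{tpd}; throughout, the workhorse is the Kronecker mixed-product rule $(\bA\otimes\bB)(\bC\otimes\bD)=(\bA\bC)\otimes(\bB\bD)$. First I would solve the normal equations for \eqref{grr}. Writing $\bG$ for the inverse matrix appearing in ${\cal M}$, namely $\bG=\left[(\bB_s^T\bB_s)\otimes(\bB_t^T\bB_t)+{\cal P}\right]^{-1}$ in the OLS case, the estimate is $\hat{\btheta}=\bG(\bB_s^T\otimes\bB_t^T)\by$, so that
\[\mathbf{\cal H}=(\bB_s\otimes\bB_t)\,\bG\,(\bB_s^T\otimes\bB_t^T).\]
For the penalized GLS estimate \eqref{pglsvec} the only changes are that $\bG$ becomes $\left[(\bB_s^T\hat{\bSigma}^{-1}\bB_s)\otimes(\bB_t^T\bB_t)+{\cal P}\right]^{-1}$ and the trailing factor becomes $(\bB_s^T\hat{\bSigma}^{-1}\otimes\bB_t^T)$; I would treat both cases in parallel. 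The crucial structural fact, used below, is that $\bG$ is symmetric in either case, since $\bB_t^T\bB_t$, ${\cal P}$, and both $\bB_s^T\bB_s$ and $\bB_s^T\hat{\bSigma}^{-1}\bB_s$ (the latter by symmetry of $\hat{\bSigma}^{-1}$) are symmetric.

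Next I would compute the inner quantity of \eqref{dfh}. Applying the mixed-product rule to the outer factors gives $(\be\el^T\otimes\bI_n)(\bB_s\otimes\bB_t)=(\be\el^T\bB_s)\otimes\bB_t$ and $(\bB_s^T\otimes\bB_t^T)(\bone_L\otimes\bI_n)=(\bB_s^T\bone_L)\otimes\bB_t^T$, whence
\[(\be\el^T\otimes\bI_n)\mathbf{\cal H}(\bone_L\otimes\bI_n)=\left[(\be\el^T\bB_s)\otimes\bB_t\right]\bG\left[(\bB_s^T\bone_L)\otimes\bB_t^T\right].\]
Here $\be\el^T\bB_s$ is simply the $\ell$th row of $\bB_s$, and it is this factor that will produce the block structure of \eqref{tpd}.

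The heart of the argument is to convert the trace of this $n\times n$ matrix into a Hadamard sum via the elementary identity $\tr(\bA\bC)=\bone_n^T(\bA\odot\bC^T)\bone_m$, valid for $\bA$ of size $n\times m$ and $\bC$ of size $m\times n$. Taking $\bA=(\be\el^T\bB_s)\otimes\bB_t$ and $\bC=\bG\left[(\bB_s^T\bone_L)\otimes\bB_t^T\right]$, its transpose is
\[\bC^T=\left[(\bone_L^T\bB_s)\otimes\bB_t\right]\bG^T=\left[(\bone_L^T\bB_s)\otimes\bB_t\right]\bG={\cal M},\]
where the middle equality uses the symmetry of $\bG$ and the last is precisely the definition of ${\cal M}$ for penalized OLS (in the GLS case one likewise obtains $\bC^T=\left[(\bone_L^T\hat{\bSigma}^{-1}\bB_s)\otimes\bB_t\right]\bG={\cal M}$, using $(\bB_s^T\hat{\bSigma}^{-1}\bone_L)^T=\bone_L^T\hat{\bSigma}^{-1}\bB_s$). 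This yields $d\el=\bone_n^T\left\{\left[(\be\el^T\bB_s)\otimes\bB_t\right]\odot{\cal M}\right\}\bone_{K_sK_t}$. To assemble the $L$ scalars, note that the $\ell$th block of $n$ rows of $\bB_s\otimes\bB_t$ equals $(\be\el^T\bB_s)\otimes\bB_t$ while every block of $\bone_L\otimes{\cal M}$ equals ${\cal M}$; hence premultiplying the Hadamard product $(\bB_s\otimes\bB_t)\odot(\bone_L\otimes{\cal M})$ by $\bI_L\otimes\bone_n^T$ (which sums each block of $n$ rows) and postmultiplying by $\bone_{K_sK_t}$ reproduces exactly \eqref{tpd}.

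I expect the main obstacle to be the bookkeeping in this last reduction: one must verify the trace-to-Hadamard identity, choose the grouping of Kronecker factors into $\bA$ and $\bC$ so that $\bC^T$ lands on the matrix ${\cal M}$ as defined, and see clearly that the symmetry of $\bG$ is exactly what converts $\bC^T$ into ${\cal M}$. Everything else is routine application of the mixed-product rule. It is worth remarking that assumptions \eqref{as1}--\eqref{as2} play no role here, since $\bB_s^T\bone_L$ and $\bone_L^T\bB_s$ are just column sums of $\bB_s$ and require no special structure.
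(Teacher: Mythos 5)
Your proposal is correct and follows essentially the same route as the paper's own proof: write the hat matrix explicitly, use the mixed-product rule and the symmetry of the inverted matrix $\bG$ to identify $\mathbf{\cal H}(\bone_L\otimes\bI_n)=(\bB_s\otimes\bB_t){\cal M}^T$, convert the trace to a Hadamard sum via $\tr(\bA\bC)=\bone^T(\bA\odot\bC^T)\bone$, and stack the $L$ scalars into \eqref{tpd}. Your closing observations (that \eqref{as1}--\eqref{as2} are not needed here and that the adaptive penalty is covered by treating ${\cal P}$ generically) are also consistent with the paper.
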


\begin{thm} \emph{(Pointwise df for the smoothed FPC score method)}\label{chiouthm}  
For the fitted values matrix $\hat{\bY}$ of Appendix~\ref{compchiou},  the pointwise df vector is
\begin{equation}\label{dfchiou}\bd=\bone_L+   \left[(\bB_s\bV_A)\odot(\bone_L\bone_{K_s}^T\bQ_s^T\bV_A)\right] \bM^{*T} (\bA^T_t\odot\bA^{cT}_t) \bone_n,\end{equation}
where $\bM^*=\left(\frac{1}{1+\lambda_a \tau_k}\right)_{1\leq k\leq K_t, 1\leq a\leq A}$, $\bA_t$ is given by \eqref{defa}, and $\bA^{c}_t=(\bI_n-\bJ_n)\bA_t$.
\end{thm}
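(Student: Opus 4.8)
The plan is to apply Definition~\ref{pwdfdef} directly. As is standard for FPCA-based procedures, I would hold the estimated eigenfunctions $\hat\phi_1,\ldots,\hat\phi_A$ fixed when differentiating --- writing $\hat\phi_a(s_\ell)=(\bB_s\bV_A)_{\ell a}$ --- so that the fit of Appendix~\ref{compchiou} is an affine map of $\by$, with the only differentiated data-dependence coming from the pointwise mean. Writing the fitted values as $\hat y_{i\ell}=\hat\mu_\ell+\sum_{a=1}^A\hat\phi_a(s_\ell)\,\hat g_a(t_i)$, where $\hat\mu_\ell=\bar y_{\cdot\ell}$ is the pointwise sample mean and $\hat g_a$ is the penalized-spline smooth of the $a$th score vector, I would split $d_\ell$ into a mean contribution and an FPC contribution. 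The mean term is immediate: $\partial\hat\mu_\ell/\partial y_{i\ell^*}=\delta_{\ell\ell^*}/n$, so summing over $\ell^*$ and then over $i$ contributes exactly $1$ at each location, giving the leading $\bone_L$.

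For the FPC term I would use the Demmler--Reinsch form $\hat g_a(t_i)=(\bS_a\hat{\bc}_{\cdot a})_i$ with $\bS_a=\bA_t\,\mbox{Diag}(\bm^*_{\cdot a})\,\bA_t^T$, where $\bA_t$ is \eqref{defa}, $\bm^*_{\cdot a}$ is the $a$th column of $\bM^*$, and $\hat{\bc}_{\cdot a}$ is the $a$th score vector; each score is a fixed linear functional of the centered data, $\hat c_{ja}=\sum_\ell w_{a\ell}(y_{j\ell}-\bar y_{\cdot\ell})$. Differentiating gives $\partial\hat c_{ja}/\partial y_{i\ell^*}=w_{a\ell^*}(\delta_{ij}-1/n)$, the $-1/n$ being the centering fed into the scores. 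Summing over $\ell^*$ collapses the location index into the weight sum $\sum_{\ell^*}w_{a\ell^*}$, while the $(\delta_{ij}-1/n)$ factor, after summing over $i$, turns $\tr(\bS_a)$ into $\tr[(\bI_n-\bJ_n)\bS_a]$; the FPC contribution to $d_\ell$ is thus $\sum_a\hat\phi_a(s_\ell)\big(\sum_{\ell^*}w_{a\ell^*}\big)\tr[(\bI_n-\bJ_n)\bS_a]$.

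Two identifications finish the proof. By cyclicity of the trace and symmetry of $\bI_n-\bJ_n$, $\tr[(\bI_n-\bJ_n)\bS_a]=\tr[\mbox{Diag}(\bm^*_{\cdot a})\,\bA_t^{cT}\bA_t]=\sum_k m^*_{ka}\sum_i a^c_{ik}a_{ik}$, which is exactly the $a$th entry of $\bM^{*T}(\bA_t^T\odot\bA_t^{cT})\bone_n$. And the weight sum $\sum_{\ell^*}w_{a\ell^*}$ is the inner-product functional defining the $a$th score, evaluated at the constant function $1$; since by \eqref{as1} the constant function lies exactly in the $s$-basis span ($\bone_L=\bB_s\bone_{K_s}$), this equals $\int_{\cal S}\hat\phi_a=\bone_{K_s}^T\bQ_s\bv_a$. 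Hence $\hat\phi_a(s_\ell)\sum_{\ell^*}w_{a\ell^*}$ is the $(\ell,a)$ entry of $(\bB_s\bV_A)\odot(\bone_L\bone_{K_s}^T\bQ_s^T\bV_A)$, and multiplying this matrix by the vector of traces reproduces \eqref{dfchiou}.

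The main obstacle is pinning down the score map precisely enough from Appendix~\ref{compchiou}: one must confirm that $\hat c_{ja}$ is a fixed linear functional of the centered data (eigenfunctions frozen) and that its weights sum over locations to $\bone_{K_s}^T\bQ_s\bv_a$, the latter relying on assumption \eqref{as1}. Once this linear-functional-plus-centering structure is isolated, the trace and Hadamard-product bookkeeping is routine.
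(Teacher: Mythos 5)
Your argument is correct and reaches \eqref{dfchiou}, but it is organized differently from the paper's. The paper forms the full hat matrix explicitly: it splits $\hat{\bY}$ into the mean part and the FPC part, writes each as a Kronecker-product operator on $\vecop(\bY)$ (composing the three maps $\vecop(\bY)\mapsto\vecop(\hat{\bC})\mapsto\vecop(\hat{\bG})\mapsto\vecop(\hat{\bY}_2)$ via Lemma~\ref{nkilem}), and then extracts the pointwise df with Lemma~\ref{kronlem} and a trace-to-Hadamard identity (Theorem~8.15(b) of Schott). You instead work entrywise from Definition~\ref{pwdfdef}: you freeze the eigenfunctions, exhibit each score as a fixed linear functional of the centered data, differentiate directly, and obtain per-component traces $\tr[(\bI_n-\bJ_n)\bS_a]$ that you then reassemble. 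Your route is more elementary --- it avoids the vec/Kronecker bookkeeping and the Schott identity entirely --- at the cost of having to verify by hand the identifications (weight sums equal $\bone_{K_s}^T\bQ_s\bv_a$ via \eqref{as1}; the trace equals the $a$th entry of $\bM^{*T}(\bA_t^T\odot\bA_t^{cT})\bone_n$) that the matrix formalism delivers automatically. One small inaccuracy: in Appendix~\ref{compchiou} the mean estimate is $\hat{\bmu}^T=\frac{1}{n}\bone_n^T\bY\bB_s(\bB_s^T\bB_s)^{-1}\bB_s^T$, the presmoothed (projected) column mean, not the raw sample mean $\bar y_{\cdot\ell}$; hence $\partial\hat\mu_\ell/\partial y_{i\ell^*}$ is $\frac{1}{n}[\bB_s(\bB_s^T\bB_s)^{-1}\bB_s^T]_{\ell^*\ell}$ rather than $\delta_{\ell\ell^*}/n$. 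Summing over $\ell^*$ still gives $1/n$ because $\bB_s(\bB_s^T\bB_s)^{-1}\bB_s^T\bone_L=\bone_L$ by \eqref{as1}, so your conclusion $\bd^{(1)}=\bone_L$ stands, but the justification should invoke \eqref{as1} at that step too.
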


\begin{thm}\label{thm2s}\emph{(Pointwise df for the two-step method)} 
Let $\hat{\bY}$ be the fitted values matrix for the two-step method, given by \eqref{fit2} for the penalized variant and in Appendix~\ref{fpcvs} for the FPC and penalized FPC variants. 
\begin{enumerate}[(a)]
\item\label{step1df} The vector of (ordinary) df for the initial fitted values matrix $\tilde{\bY}$ of \eqref{ytilde} is $\tilde{\bd}=\bM^T\bone_{K_t}$, where $\bM$ is given by \eqref{defm}.
\item \label{dfpen} The pointwise df resulting from the penalized variant of step~2 is
\begin{equation}\label{dsd}\bd=\bH_s \tilde{\bd}.\end{equation}
\item \label{dffpc} For the FPC and penalized FPC variants of step~2, 
 \begin{equation}\label{dfppp}\bd=\bone_L+\bB_s\bV_A\bN^{-1}\bV^T_A\bB^T_s(\tilde{\bd}-\bone_L),\end{equation}
where
\[\bN=\left\{\begin{array}{ll}\bV^T_A\bB_s^T\bB_s\bV_A, & \mbox{for the FPC variant};\\
\bV^T_A(\bB_s^T\bB_s+\lambda_s\bP_s)\bV_A, & \mbox{for the penalized FPC variant}.\end{array}\right.\]
\end{enumerate}
\end{thm}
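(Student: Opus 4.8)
The plan is to treat the three parts in sequence, reducing each to a computation of the block traces $\tr(\mathbf{\cal H}_{\ell\ell^*})$ dictated by the block-trace form \eqref{dfh} of the pointwise df. Two facts about the step-1 smoother will be used throughout. First, the step-1 hat-matrix blocks recorded in Section~\ref{pwdf}, $\mathbf{\cal H}_{\ell\ell}=\bB_t(\bB_t^T\bB_t+\lambda_{t\ell}\bP_t)^{-1}\bB_t^T$, are each the hat matrix of a single penalized spline in $t$, and $\tilde{\mathbf{\cal H}}$ is block-diagonal. Second, because the $t$-basis reproduces constants and its penalty annihilates them (the analogue for $\bB_t,\bP_t$ of \eqref{as1}--\eqref{as2}), we have $\mathbf{\cal H}_{\ell\ell}\bone_n=\bone_n$, hence $\tr(\bJ_n\mathbf{\cal H}_{\ell\ell})=\tfrac1n\bone_n^T\mathbf{\cal H}_{\ell\ell}\bone_n=1$ for every $\ell$; this is the identity that ultimately produces the leading $\bone_L$ in \eqref{dfppp}.

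For part~(a) I would substitute the Demmler--Reinsch factorization \eqref{defa}. Writing $\bR_t^T\bR_t=\bB_t^T\bB_t$ and $\bR_t^{-T}\bP_t\bR_t^{-1}=\bU_t\mbox{Diag}(\btau)\bU_t^T$ turns $\mathbf{\cal H}_{\ell\ell}$ into $\bA_t\,\mbox{Diag}\!\big(1/(1+\lambda_{t\ell}\tau_k)\big)\,\bA_t^T$, and since $\bA_t^T\bA_t=\bI_{K_t}$ the columns of $\bA_t$ are orthonormal. Taking the trace gives the ordinary df $\tilde d_\ell=\sum_k 1/(1+\lambda_{t\ell}\tau_k)=\sum_k M_{k\ell}$, i.e.\ $\tilde\bd=\bM^T\bone_{K_t}$. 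Part~(b) then follows by vectorizing \eqref{fit2}: since $\hat\bY=\tilde\bY\bH_s^T$ gives $\hat\by=(\bH_s\otimes\bI_n)\tilde\by=(\bH_s\otimes\bI_n)\tilde{\mathbf{\cal H}}\by$, the $(\ell,\ell^*)$ block of $\mathbf{\cal H}$ is $(\bH_s)_{\ell\ell^*}\mathbf{\cal H}_{\ell^*\ell^*}$, so $\tr(\mathbf{\cal H}_{\ell\ell^*})=(\bH_s)_{\ell\ell^*}\tilde d_{\ell^*}$, and summing over $\ell^*$ via \eqref{dfh} yields $\bd=\bH_s\tilde\bd$.

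Part~(c) is the substantive one, and my first task would be to read off from Appendix~\ref{fpcvs} the explicit spatial operator. Representing the discretized FPC functions as $\hat\bPhi=\bB_s\bV_A$ (columns of $\bV_A$ being the spline coefficients of the $\hat\bphi_a$, treated as fixed for the df calculation), both variants apply to each centered row $\tilde\by_{i\cdot}-\hat\bmu$ the smoother $\bS=\bB_s\bV_A\bN^{-1}\bV_A^T\bB_s^T$, with $\bN=\bV_A^T\bB_s^T\bB_s\bV_A$ for the plain least-squares projection and $\bN=\bV_A^T(\bB_s^T\bB_s+\lambda_s\bP_s)\bV_A$ for the ridge-penalized projection. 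Writing the mean as the column average $\bone_n\hat\bmu^T=\bJ_n\tilde\bY$ (which equals $\bJ_n\bY$, since the step-1 smoother reproduces constants), the fitted matrix is $\hat\bY=\bJ_n\tilde\bY+(\bI_n-\bJ_n)\tilde\bY\bS^T$. Vectorizing gives $\mathbf{\cal H}=[(\bI_L\otimes\bJ_n)+(\bS\otimes(\bI_n-\bJ_n))]\tilde{\mathbf{\cal H}}$, whose $(\ell,\ell^*)$ block is $[\delta_{\ell\ell^*}\bJ_n+(\bS)_{\ell\ell^*}(\bI_n-\bJ_n)]\mathbf{\cal H}_{\ell^*\ell^*}$. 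Summing the block traces and using $\tr(\bJ_n\mathbf{\cal H}_{\ell\ell})=1$ together with $\tr(\mathbf{\cal H}_{\ell^*\ell^*})=\tilde d_{\ell^*}$ collapses the expression to $d_\ell=1+\sum_{\ell^*}(\bS)_{\ell\ell^*}(\tilde d_{\ell^*}-1)$, i.e.\ $\bd=\bone_L+\bS(\tilde\bd-\bone_L)$, which is \eqref{dfppp}.

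The hard part will be part~(c), for two reasons. First, one must pin down the precise linear operator of Appendix~\ref{fpcvs} and verify that the penalized FPC variant is genuinely a (ridge-)projection of the form $\bS=\bB_s\bV_A\bN^{-1}\bV_A^T\bB_s^T$, and in particular decide whether the mean is estimated from $\bY$ or from $\tilde\bY$; the argument hinges on these coinciding, which is exactly where the constant-reproduction property enters. Second, care is required in separating the data-dependence carried by the linear hat matrix (the mean $\hat\bmu$ and the scores) from the FPC directions $\bV_A$, which are held fixed; only under this standard convention does the derivative-based Definition~\ref{pwdfdef} reduce to the clean block-trace computation above.
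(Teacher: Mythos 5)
Your proof is correct in substance and reaches all three formulas, but it takes a somewhat different route from the paper's, most visibly in parts (b) and (c). For (a) you use the direct Demmler--Reinsch trace formula $\sum_k 1/(1+\lambda_{t\ell}\tau_k)$, which the paper mentions only as an alternative; its primary derivation instead extracts $\tilde{\bd}=\bM^T(\bA_t^T\odot\bA_t^T)\bone_n$ as the $\bH_s=\bI_L$ special case of part (b) and then applies $\bA_t^T\bA_t=\bI_{K_t}$. For (b) your observation that $\tilde{\mathbf{\cal H}}$ is block-diagonal, so the $(\ell,\ell^*)$ block of $\mathbf{\cal H}$ is simply $(\bH_s)_{\ell\ell^*}\mathbf{\cal H}_{\ell^*\ell^*}$, is cleaner and more transparent than the paper's manipulation via $\bD_M$ and Hadamard-product identities. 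For (c) the difference is more consequential: the paper works directly with the three-term decomposition \eqref{yfpc}, in which the mean term is $\bJ_n\bY\bB_s(\bB_s^T\bB_s)^{-1}\bB_s^T$ (built from the \emph{raw} $\bY$, with the presmoothing projection), and gets $\bd^{(1)}=\bone_L$ from Lemma~\ref{kronlem} together with the \emph{spatial} assumption \eqref{as1} only. You instead rewrite the estimator as $\bJ_n\tilde\bY+(\bI_n-\bJ_n)\tilde\bY\bS^T$ and obtain the leading $\bone_L$ from $\tr(\bJ_n\mathbf{\cal H}_{\ell\ell})=1$, which requires the \emph{temporal} smoother to reproduce constants ($\bB_t\bone_{K_t}=\bone_n$, $\bP_t\bone_{K_t}=\bzero$). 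That holds for the cubic $B$-splines with derivative penalties used throughout the paper, but it is an extra hypothesis not among the theorem's stated assumptions, and your version of the mean term also silently drops the projection $\bB_s(\bB_s^T\bB_s)^{-1}\bB_s^T$; neither discrepancy changes the df (both mean terms contribute exactly $\bone_L$), but to match the estimator actually defined in Appendix~\ref{fpcvs} you should either handle the term $\bJ_n\bY\bB_s(\bB_s^T\bB_s)^{-1}\bB_s^T$ as the paper does, or state explicitly that you are invoking the temporal analogues of \eqref{as1}--\eqref{as2} and that the presmoothing projection leaves the df contribution unchanged. With that tightening, your argument is a valid and arguably more elementary proof.
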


Theorem~\ref{thm2s}(\ref{dfpen}) has the following interpretation: the effect on pointwise df of the penalized variant of step 2, in which we apply the function-direction smoother $\bH_s$ to the fitted value matrix, is simply to apply the same smoother to the pointwise df vector. Thus, if we think of $\tilde{\bd}$ as a vector of noisily measured complexity indices for $\{f(\cdot,s\el):\ell=1,\ldots,L\}$, then step~2 serves to denoise these measurements. The pointwise df vector \eqref{dfppp} for the other two variants of the two-step method is somewhat harder to interpret; but it can be shown that if $\bone_{L}$ belongs to the column space of $\bB_s\bV_A$, then   \eqref{dfppp} reduces to the form \eqref{dsd} with $\bH_s=\bB_s\bV_A\bN^{-1}\bV^T_A\bB^T_s$.

\section{Simulation study}\label{simsec}
\subsection{Simulation design}\label{simdesign}
For each of the simulation settings below we ran 100  replications, each with sample size $n=100$. The predictors $t_1,\ldots,t_{100}$ in each simulated data set belonged to the interval $[0,1]$; for computational reasons, 98 were sampled from the Uniform$[0,1]$ distribution, and the other two were taken to be 0 and 1.  For $i=1,\ldots,100$, the $i$th observed functional response was given by $y_i(s)=f(t_i,s)+\varepsilon_i(s)$ for $s\in S=\{0,\frac{1}{200},\ldots,1\}$, where $f$ was one of the two functions given below and
\begin{equation}\label{vee}\varepsilon_i(s)=\eta_i(s)+e_i(s),\end{equation}
where $\eta_i$ was drawn from a mean-zero Gaussian process with $\cov[\eta_i(s_1),\eta_i(s_2)]=\gamma\sigma^2(0.5^{200|s_1-s_2|})$, and  $e_i$ was drawn from a mean-zero Gaussian white noise process with variance $\sigma^2$; these two processes were mutually independent. One can think of $f(t_1,\cdot)+\eta_1(\cdot),\ldots,f(t_n,\cdot)+\eta_n(\cdot)$ as the true, imperfectly observed functional responses---i.e., $\eta_i(\cdot)$ is the deviation of the $i$th response function from the conditional mean function given $t_i$---whereas the $e_i(\cdot)$'s represent measurement error. \figref{funrespfig} displays an example set of functional responses.
\begin{figure}
\centering
\includegraphics[width=\textwidth]{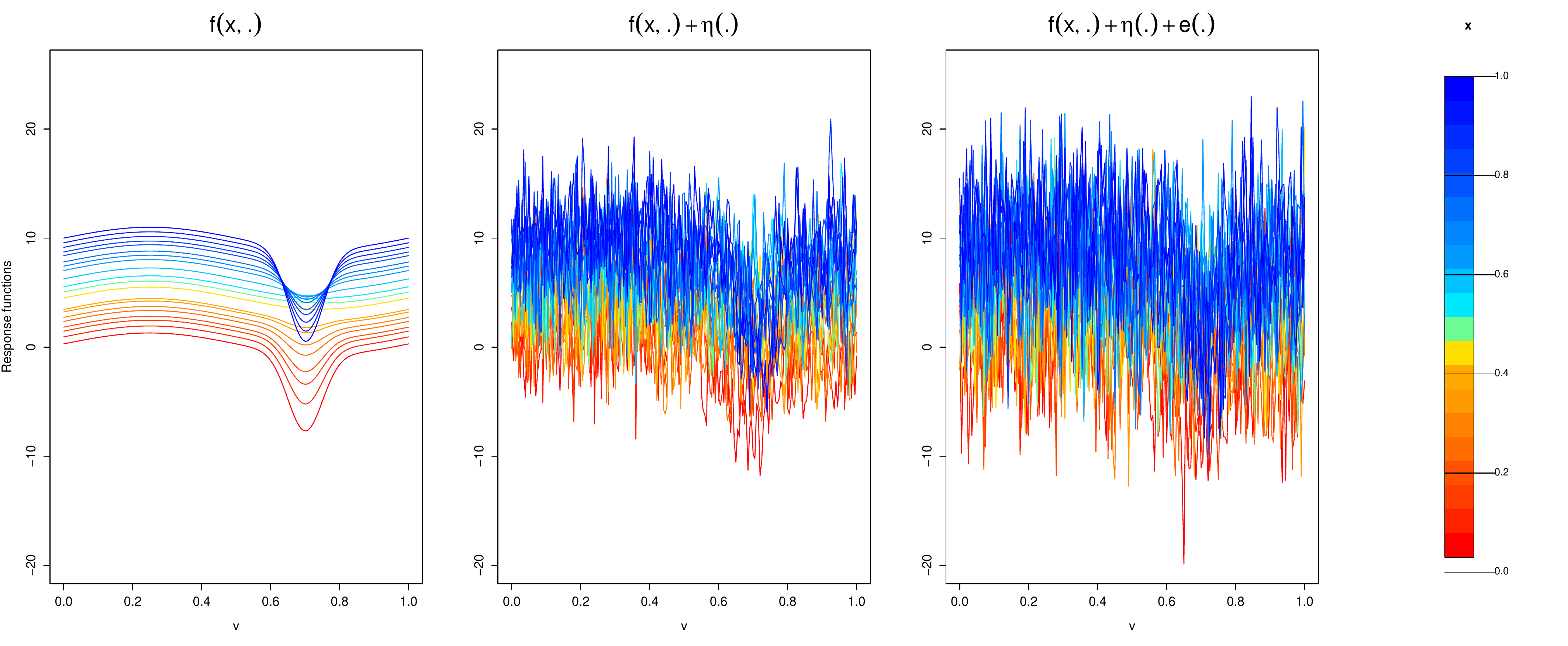}  
\caption{Simulated functional responses, color-coded by the value of the predictor $t$, with true mean function $f_2$, $R^2=0.3$, and $\gamma=1$. For different values of $\gamma$, the third subfigure would look similar, but the second would show greater variation among the functions for larger $\gamma$.}\label{funrespfig}
\end{figure}

There were eight simulation settings, based on two values for each of the following three factors.
\begin{enumerate}
\item The true mean function $f:[0,1]\times[0,1]\longrightarrow\mathbb{R}$ was either
\[
f_1(t,s) =8(s-.5)^2+\sin\left(\frac{\pi x[x(1-2p_s) + 2p_s^2 - 1]} {2p_s(p_s-1)}\right)
\]
where $p_s = \frac{\sin(2\pi s)+8}{16}$; or
\[f_2(t,s)=\sin(2\pi s) + 10x - \frac{\phi[20(s-0.7)]}{c_s^2}(t-c_s)^2\]  
where $c_s = 0.5-0.2(s-0.5)^2$ and $\phi$ is the standard normal density function. The function $f_1$ was chosen so that for each $s$, $f_1(\cdot,s)$ would have a single peak, at $t=p_s$; $f_2$ was designed to make $f_2(\cdot,s)$  quadratic in $t$, but approximately linear for $s$ far from 0.7---an example of the varying smoothness with respect to $t$ that our estimators aim to capture (see \figref{f1f2fig}).

\item The functional coefficient of determination \citep[cf.][]{muller2008} for the true mean function $f$ was set to 0.05 or 0.3. This parameter was defined as
 \begin{equation}\label{r2def}R^2=1-\frac{\sum_{i=1}^n \int_0^1 [y_i(s)-f(t_i,s)]^2 ds}{\sum_{i=1}^n\int_0^1[y_i(s)-\bar{y}(s)]^2 ds},\end{equation}
where $\bar{y}(s)=\sum_{i=1}^n y_i(s)/n$, and was evaluated approximately by replacing the integrals with a sum over the 201 points $s\in V$ (since these were equally spaced).   To attain the chosen value of $R^2$, the value of $\sigma^2$ was adjusted via the iterative procedure outlined in Appendix~\ref{r2app}.

\item The ratio $\gamma$ of the variance of the $\eta_i(\cdot)$ process to the variance of the $e_i(\cdot)$ process was set to 0.25 or 4.
\end{enumerate}

\begin{figure}
\centering
\includegraphics[width=.9\textwidth]{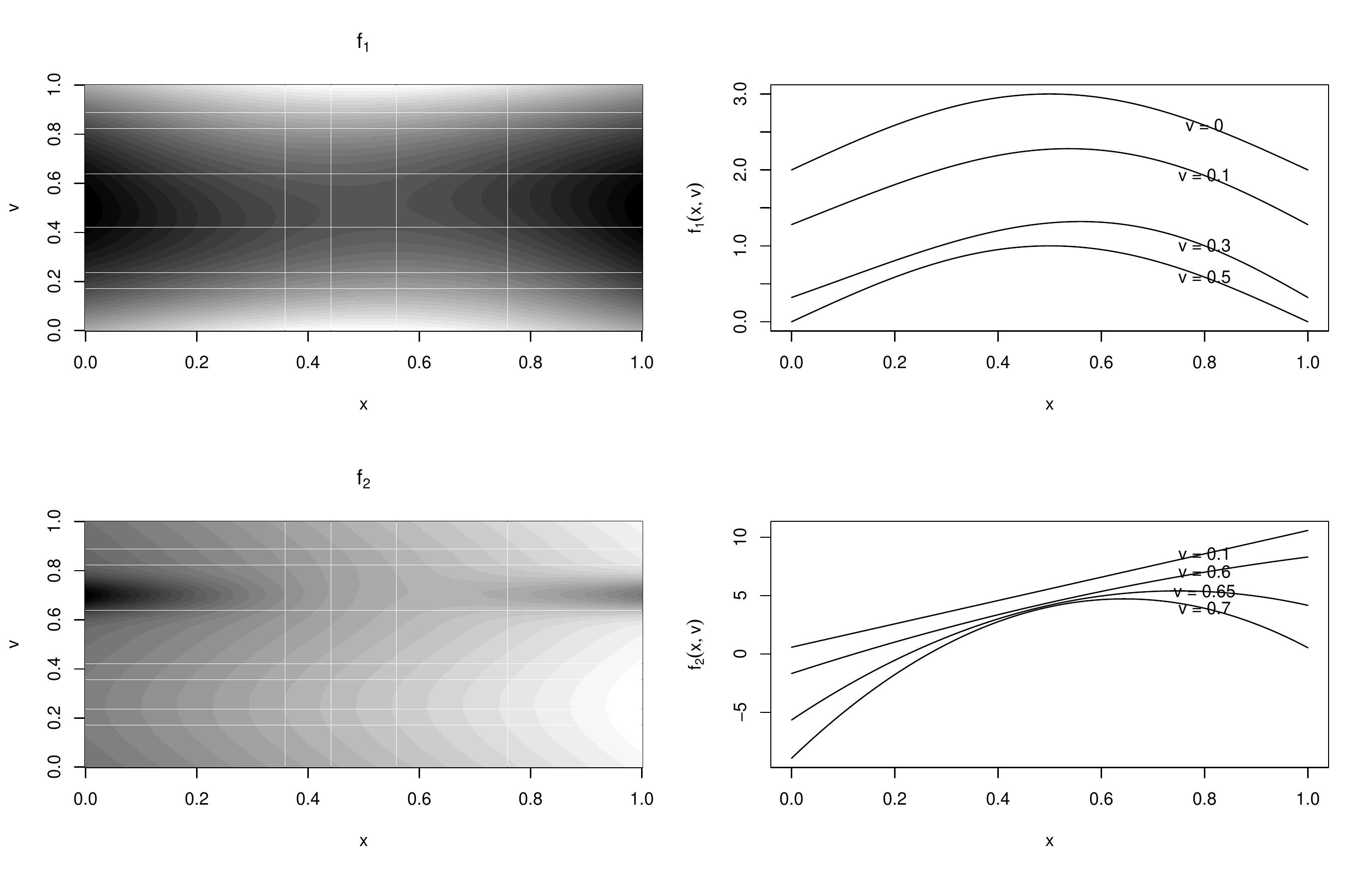} 
\caption{True mean functions used in the simulations, displayed both as grey-scale images and as cross-sections $f(\cdot,s)$ for selected $s$; $f_1$ is shown above, and $f_2$ below.}\label{f1f2fig}
\end{figure}

We compared the following methods:
\begin{enumerate}[(a)]
\item the OLS and GLS tensor product penalty methods of Section~\ref{tpp}, with either the non-adaptive penalty  \eqref{wct} (denoted by ``TP-OLS/GLS'') or the adaptive penalty \eqref{ncpen} (denoted by ``TP-OLS/GLS-adapt'');
\item the smoothed FPC score method of Section~\ref{fpc} (``FPC-scores''); and
\item the two-stage method of Section~\ref{2s}, with the three variants (penalized, FPC, and penalized FPC) of step~2 (``2s-pen''/``2s-FPC''/``2s-penFPC'').
\end{enumerate}
We used cubic $B$-spline bases with second-derivative penalties and dimensions $K_s=15$, $K_s=30$ and (for the adaptive penalty) $K_s^*=5$, with equally spaced knots and repeated knots at the endpoints as in \cite{ramsay2009}. For the computationally intensive penalized OLS and GLS methods, a slightly lower value of $K_s=25$ was used. The number of PCs was chosen by five-fold cross-validation for the FPC-scores and 2s-FPC methods, and fixed at 30 for the 2s-penFPC method. Five-fold cross-validation was also used to choose $\lambda_s$ for the 2s-pen and 2s-penFPC methods.

Performance was evaluated using both the integrated squared error (ISE) for estimates $\hat{f}$ of the mean function $f$,
\[\mbox{ISE}_f=\int_0^1\int_0^1[\hat{f}(t,s)-f(t,s)]^2\mbox{ }dt\mbox{ } ds,\]
and the ISE for estimating the derivative with respect to $t$,
\[\mbox{ISE}_{\partial f/\partial t}=\int_0^1\int_0^1\left[\frac{\partial\hat{f}}{\partial t}(t,s)-\frac{\partial f}{\partial t}(t,s)\right]^2\mbox{ }dt\mbox{ } ds.\]
The latter metric is particularly relevant for varying-smoother models, given our interest in estimating the shape of $f(\cdot,s)$ for each $s$.

\subsection{Results}
Figures~\ref{funcfig} and \ref{derivfig} present boxplots of relative ISE$_f$ and ISE$_{\partial f/\partial t}$ for all estimators, where relative ISE is defined as the ISE divided by the minimum ISE attained by any of the methods for a given replication.
Noteworthy results include the following:
\begin{enumerate}[(a)]
\item The tensor product penalty methods are generally best for $R^2=0.05$, especially with true function $f_1$; this is clearer for estimating $f$ than for estimating $\partial f/\partial t$.
 \item GLS generally outperforms OLS for $\gamma=4$, but the two are similar for $\gamma=0.25$. This is as expected since the higher $\gamma$ implies more strongly dependent residuals.
 \item The adaptive tensor product penalty seems more helpful for $f_2$ than for $f_1$, since the shape of $f_2(\cdot,s)$ depends more strongly on $s$.
 \item Of the three versions of step~2, the performance of the FPC variant is uniformly worst. The penalized FPC variant seems to have an advantage for estimating $\partial f_1/\partial t$ with $R^2=0.3$, but otherwise the penalized variant is the best of the three. Overall, the penalized variant of the two-step method has the best performance of all the methods for $R^2=0.3$.
 \item Separate smooths at each location (step~1 of the two-step approach) performed much less well than any of the methods shown. The relative ISE$_f$ was uniformly at least 6.8, while the relative ISE$_{\partial f/\partial t}$ was uniformly at least 3.6.
\end{enumerate}

\begin{figure}
\centering
\includegraphics[width=\textwidth]{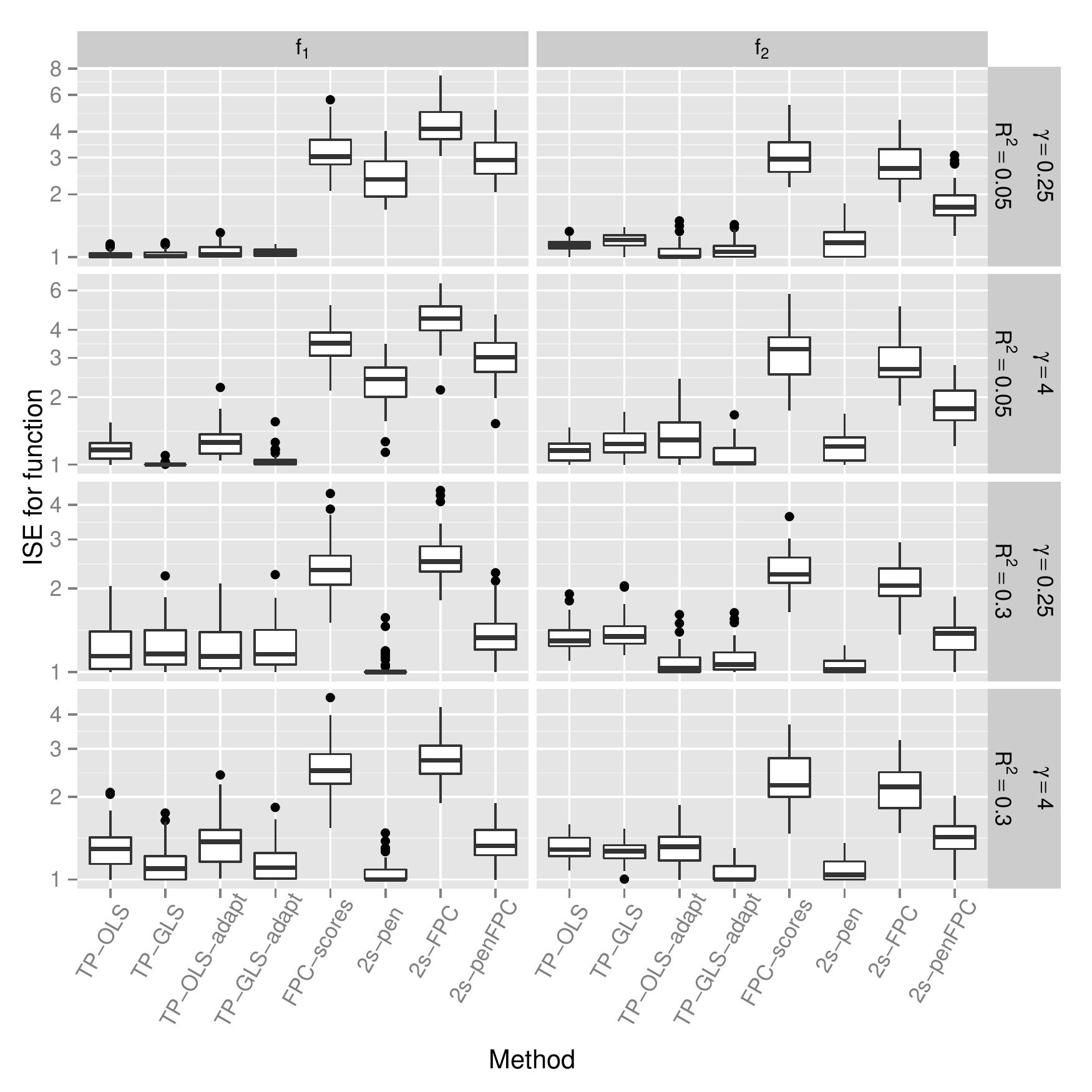}  
\caption{Relative ISE$_f$ for the different estimators (see the text for the abbreviations).}\label{funcfig}
\end{figure}

\begin{figure}
\centering
\includegraphics[width=\textwidth]{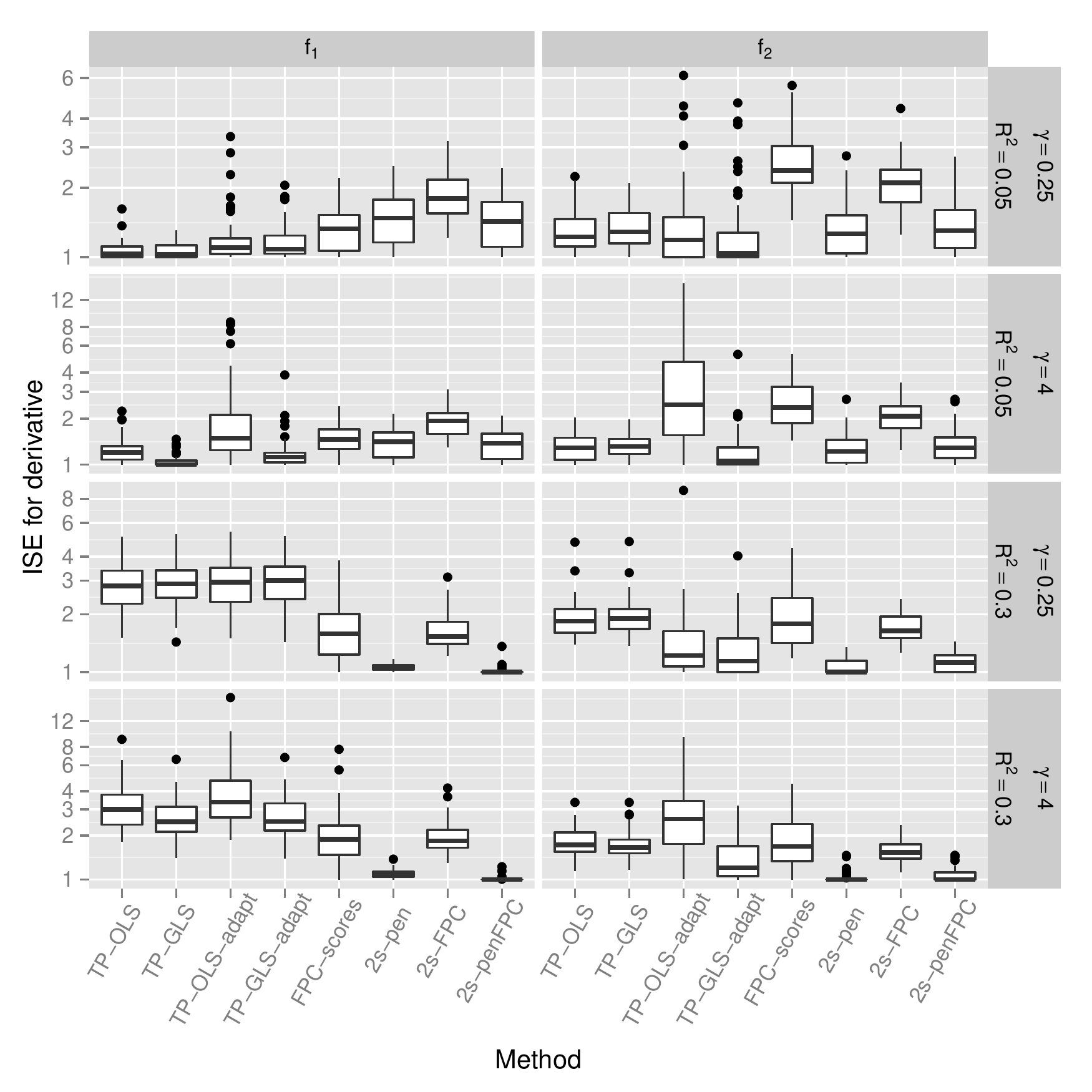}  
\caption{Relative ISE$_{\partial f/\partial t}$ for the different estimators.}\label{derivfig}
\end{figure}

The computation times for all methods are reported in Table~\ref{timetable}. The tensor product penalty methods are much slower than the other methods, due to the need to perform smoothing with $nL=20100$ observations. Although our implementation relies on the \texttt{bam} function in the \texttt{mgcv} package \citep{wood2006} for R \citep{R}, a function designed for very large data sets, choosing multiple smoothing parameters with this many observations remains computationally demanding.
\begin{table}
\centering
\begin{tabular}{|ll|ll|}
\hline
TP-OLS & 46.88 (18.91) & FPC-scores& 8.69 (3.20)\\TP-GLS & 92.09 (34.41) &  2s-pen&    4.11 (0.96)\\TP-OLS-adapt & 101.48 (30.90) & 2s-FPC& 5.00 (1.30)\\ TP-GLS-adapt  & 200.06 (54.03) & 2s-penFPC 
   & 5.76 (1.46) \\
\hline
\end{tabular}
\caption{Mean and standard deviation of the computation time, in seconds, for  each method in the simulation study.}\label{timetable}
\end{table}

\figref{simdffig} illustrates how the notion of pointwise df can help to explain the performance differences among methods. Recall that $f_2(\cdot,s)$ is quadratic, but approximately linear except near $v=0.7$. Accordingly, the separate pointwise smooths in the temporal direction, obtained in step~1 of the two-step method, have median df (shown as circles) near 3 for $v\approx 0.7$ and near 2 elsewhere, for the simulated data sets with $R^2=0.05$ and $\gamma=4$. As we would expect from Theorem~\ref{thm2s}(b), the penalized variant of step~2 yields pointwise df values near these intuitively ``correct'' values. Similarly, the pointwise df for GLS with adaptive tensor product penalty rises to a peak for roughly the same range of $s$ values, whereas the pointwise df vector for the non-adaptive penalty is quite flat. The pointwise df for the smoothed FPC scores estimator fluctuates with $s$ in a manner seemingly unrelated to the true shape of $f_2(\cdot,s)$. These observations provide some insight into the comparatively poor performance of the non-adaptive GLS and smoothed FPC scores estimators.
\begin{figure}
\centering
\includegraphics[width=.8\textwidth]{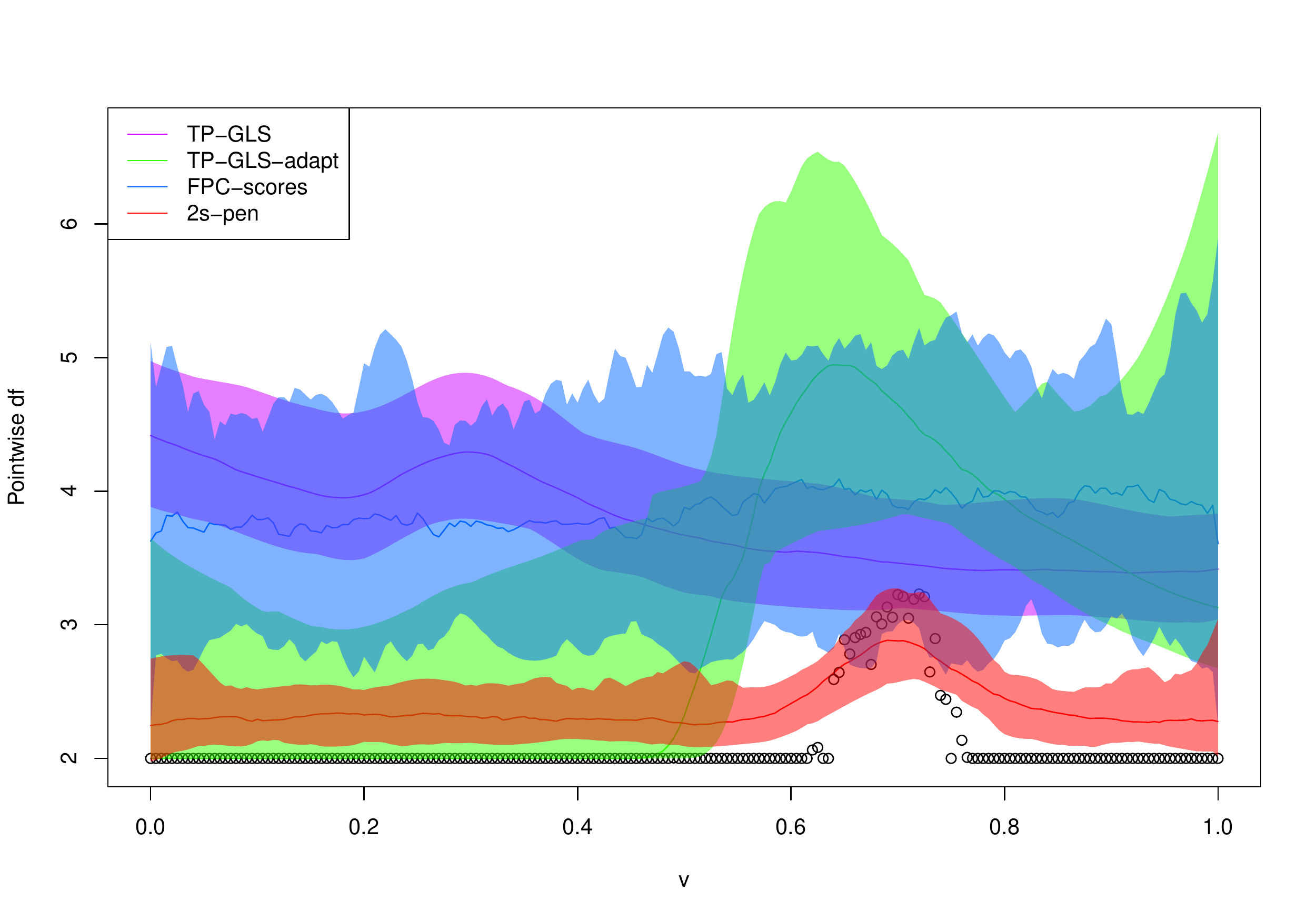}
\caption{Median and 5th and 95th percentiles of the pointwise df attained by the indicated methods, for the simulated data with true mean function $f_2$, $R^2=0.05$, and $\gamma=4$. Circles indicate medians for separate pointwise smooths (initial estimates for the two-step method).}\label{simdffig}
\end{figure}


\section{Application: Development of corpus callosum microstructure}\label{realsec}
We now return to the corpus callosum fractional anisotropy data described in the introduction.  A full description of the image data processing, as well as an illuminating set of analyses, can be found in \cite{imperati2011}. Here we aim to estimate the mean FA $f(t,s)$ where $t$ denotes age and $s$ denotes location, expressed as arc length along the path depicted in \figref{stan}. Before presenting our modeling results let us consider some evidence for nonlinear change in mean FA with respect to age. For $\ell=1,\ldots,107$ we performed a restricted likelihood ratio test \citep[RLRT;][]{crainiceanu2004} to test the null hypothesis that $f(\cdot,s\el)$ is linear (mean FA for the $\ell$th voxel changes linearly with age) against the alternative of nonlinear change. \figref{nonlinfig} shows the resulting $p$-values for each voxel. (These $p$-values are not adjusted for multiple tests, since our aim here is descriptive rather than to test the global null hypothesis that $f(\cdot,s\el)$ is linear for each $\ell$.) Also shown are separate penalized spline smooths for FA at three voxels, with the smoothing parameter chosen by REML. The first of these voxels is located in the sensorimotor portion of the brain; the second, in the posterior portion of the prefrontal lobes, with projections into the inferior and middle frontal gyri; and the third, in the anterior portion of the prefrontal lobes. In the first and third voxels, it appears that mean FA attains a peak in young adulthood, and then declines. These nonlinear trajectories are consistent with the RLRT $p$-values of .019 and .001, respectively, for the two voxels. But for the second voxel, there is no strong evidence for nonlinear change in mean FA. Features of these curves, such as peaks, are of biological interest, as they may provide insight into characteristic patterns of development for different cognitive abilities. However, the separate smooths at each voxel are quite noisy. Our hope is that varying-smoother models can improve the curve estimates by sharing information across neighboring voxels.
\begin{figure}
\centering
\includegraphics[width=.9\textwidth]{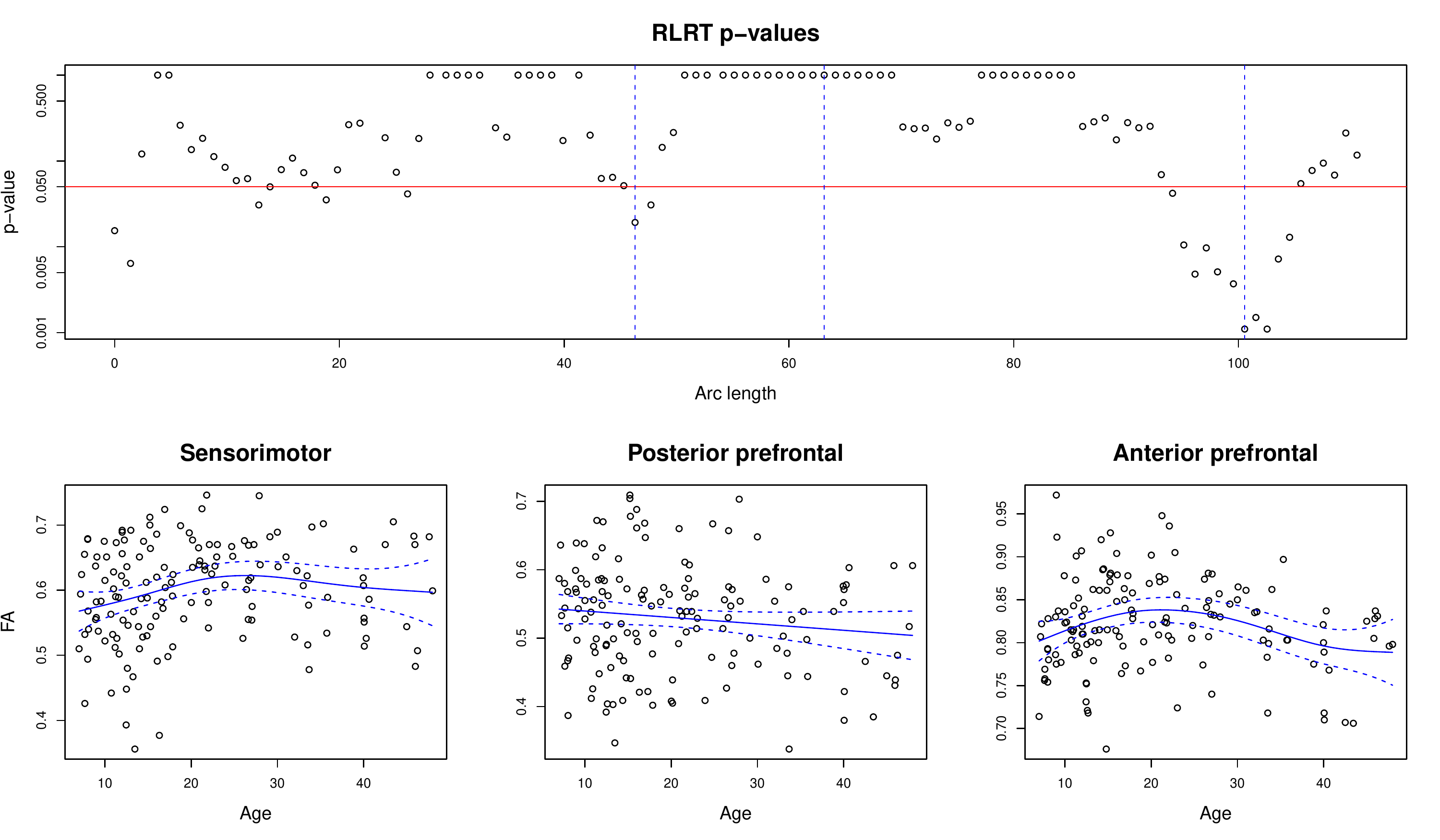} 
\caption{Above, $p$-values from restricted likelihood ratio tests of the null hypothesis that mean FA changes linearly with age. Below, curve estimates, $\pm 2$ approximate standard errors, for the three voxels indicated above by blue dashed vertical lines; headings refer to the brain regions where these voxels are located.}\label{nonlinfig}
\end{figure}

We estimated the mean FA $f(t,s)$ by the eight methods that were included in the simulation study. For these data, neither adaptive penalization (for the tensor product penalty approach) nor the three variants of step~2 (for the two-step approach) produced noteworthy differences; so we focus mainly on non-adaptive OLS and GLS, and on the penalized FPC variant of step~2. \figref{imageplotfig} presents the four estimates of $f(t,s)$. In general, FA varies much more with respect to $s$ (between brain regions) than with respect to $t$ (at different age levels for a given region). The OLS estimate appears undersmoothed, so that $\hat{f}(\cdot,s)$ tends to include spurious bumps with respect to age. But for the other three methods, variation with age is so thoroughly drowned out as to be nearly imperceptible.
\begin{figure}
\centering
\includegraphics[width=\textwidth]{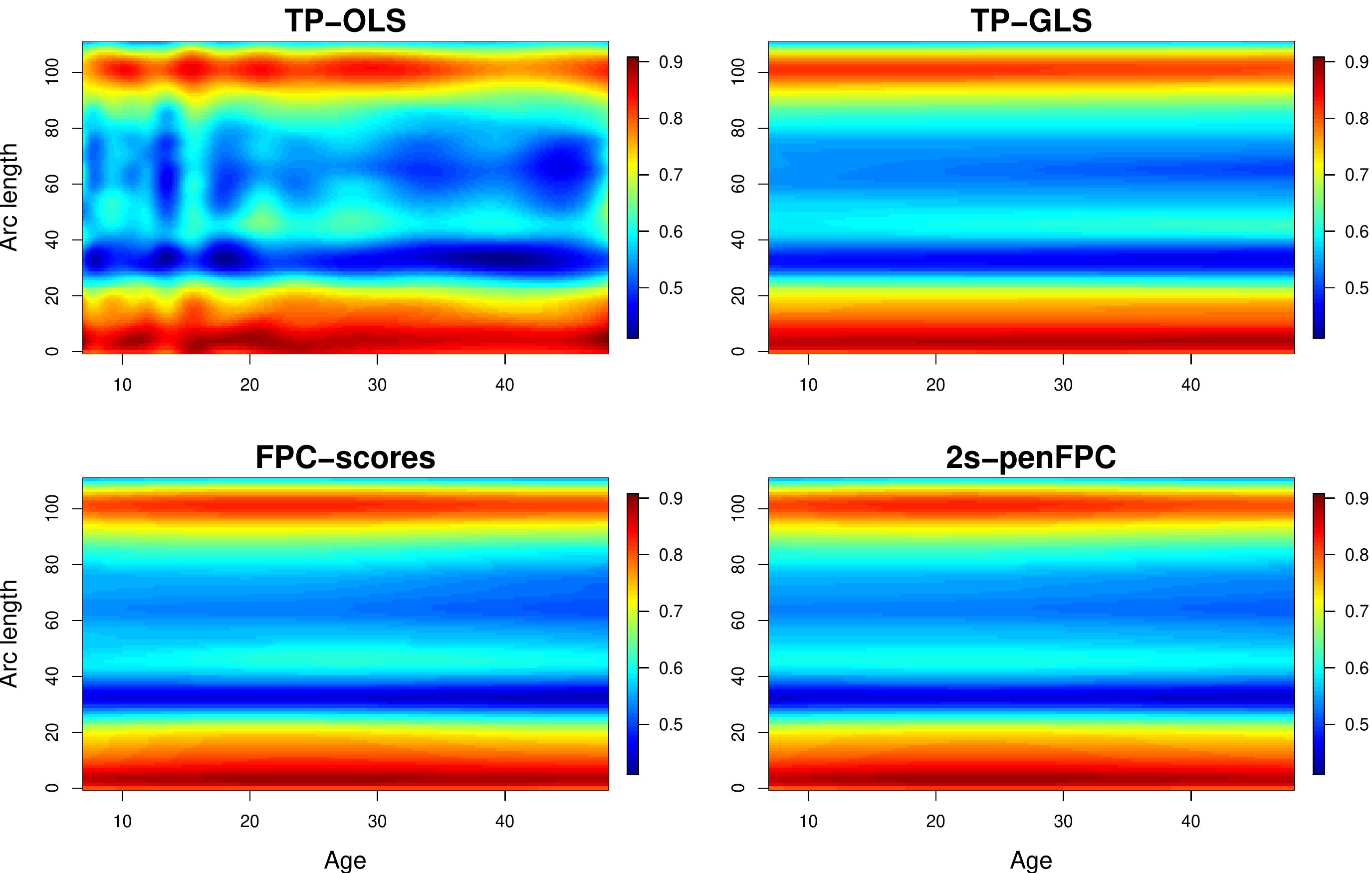} 
\caption{Estimates of the mean function $f(t,s)$, where $t$ denotes age and $s$ denotes arc length.}\label{imageplotfig}
\end{figure}

The rainbow plots \citep{hyndman2010} in \figref{rainbowfig} make it easier to examine the shape of the estimates $\hat{f}(\cdot,s)$ for different methods and different locations $s$. Vertical lines are drawn at the same three voxels as in \figref{nonlinfig}, and vertical progression from blue to red at a given arc length indicates that the estimate $\hat{f}(\cdot,s)$ is monotonic for that $s$. Recall that separate nonparametric regressions suggested that FA attained a peak for the first and third, but not the second. Although varying-smoother models aim to improve on separate nonparametric regressions at each voxel, it seems reasonable to expect that the true shape of $f(\cdot,s)$ is broadly consistent with the results of the separate models. 
\begin{figure}
\centering
\includegraphics[width=\textwidth]{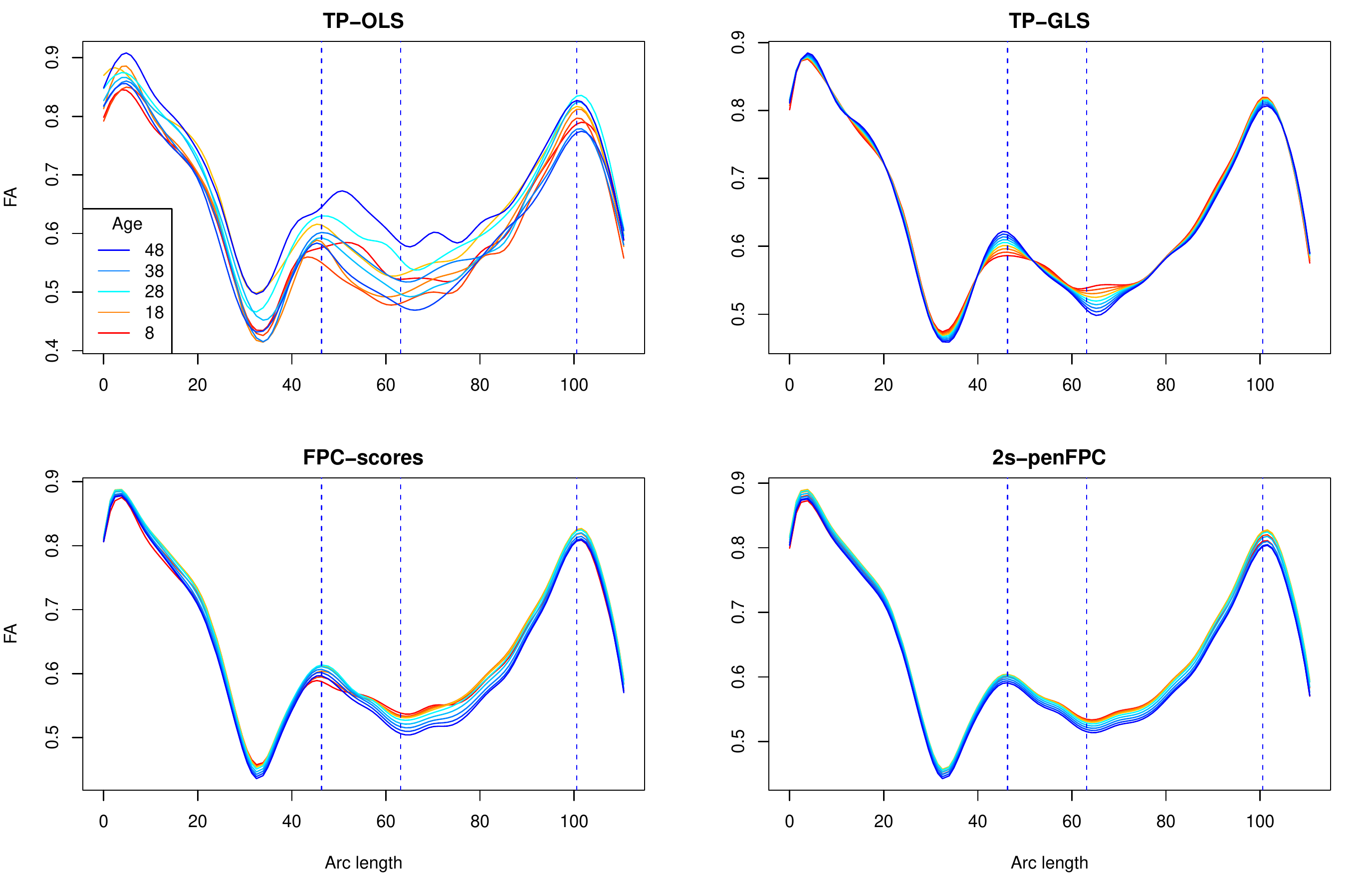} 
\caption{Rainbow plots for fitted value functions $\hat{f}(t,\cdot)$ with $t=8,13,18,\ldots,48$.}\label{rainbowfig}
\end{figure}

As  in \figref{imageplotfig}, the penalized OLS estimates $\hat{f}(\cdot,s)$ are very erratic, with apparently spurious fluctuations with respect to age. The smoothed-FPC-scores and two-stage estimates appear consistent with the scatterplots in \figref{nonlinfig}: they indicate that mean FA peaks in young adulthood in the sensorimotor and anterior prefrontal regions displayed there, but decreases linearly with age in the posterior prefrontal region. The penalized GLS estimates $\hat{f}(\cdot,s)$, on the other hand, indicate that mean FA changes monotonically in all three regions---suggesting that penalized GLS is oversmoothing with respect to age.

These impressions are borne out by \figref{dfr}(a), in which the pointwise df for penalized OLS is seen to be uniformly low while that for penalized GLS is uniformly high. In line with Theorem~\ref{thm2s}(\ref{dfpen}), the pointwise df for the penalized variant of the two-step method is a smoothed version of the df for the separate smooths. But all three variants of the two-step approach, as well as the FPC scores method, have pointwise df values that vary, reassuringly, within the same range as the ordinary df for 107 separate smooths. 
\begin{figure}
\centering
\includegraphics[width=1.05\textwidth]{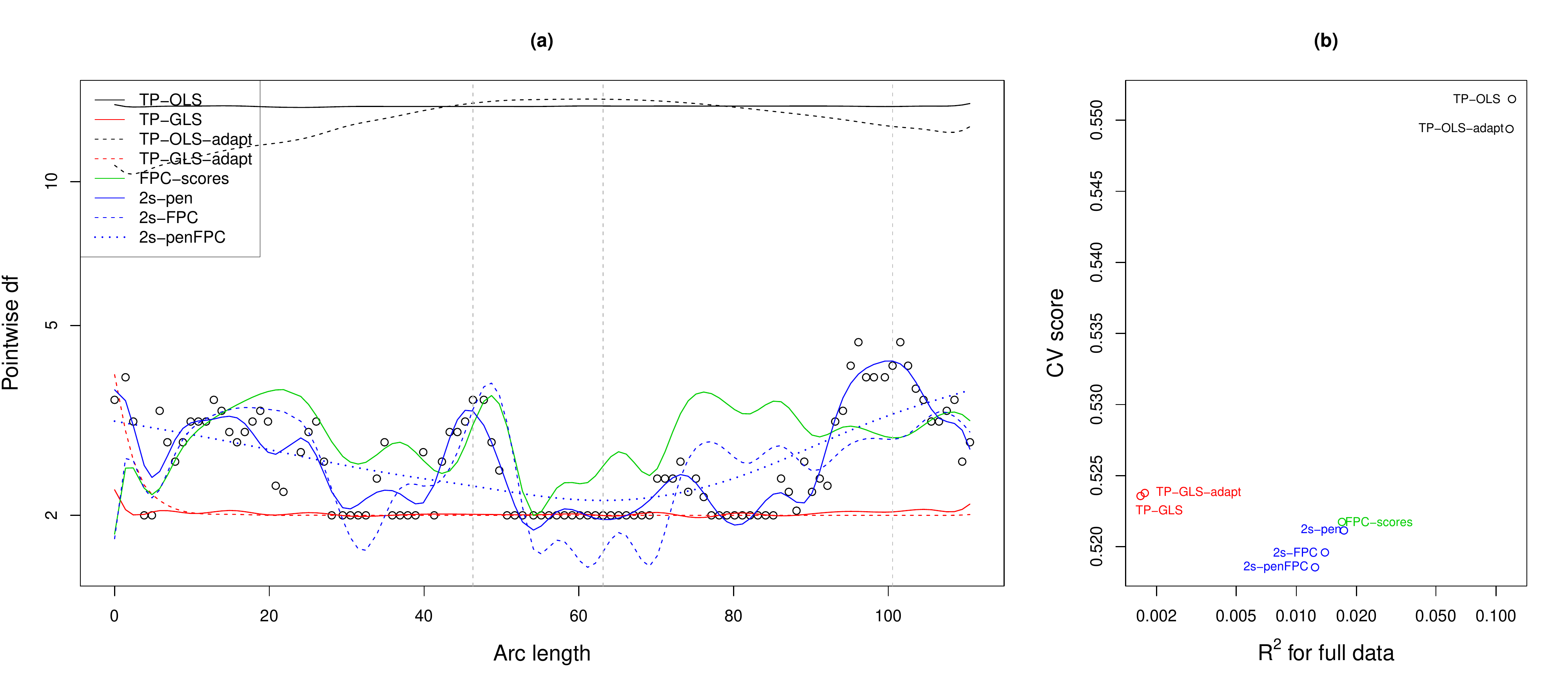} 
\caption{(a) Pointwise df for the methods applied to the corpus callosum FA data. Circles denote df for separate smooths at each of the 107 voxels. (b) $R^2$, given by a trapezoidal approximation to \eqref{r2def}, plotted against prediction error estimate from repeated five-fold cross-validation.}\label{dfr}
\end{figure}

\figref{dfr}(b) plots functional $R^2$ values  for the eight methods against prediction error estimates, based on repeated five-fold cross-validation \citep{burman1989} with 10 different splits of the observations into five validation sets. This figure provides further evidence that the penalized OLS and GLS approaches overfit and underfit the data, respectively. The $R^2$ values for the two versions of penalized OLS are approximately 0.12, while those for penalized GLS are about 0.002. The FPC-scores and two-step methods have $R^2$ values between these extremes (in the 0.013-0.017 range), and attain better predictive performance.

\figref{pref} presents the estimates $\hat{f}(\cdot,s\el)$, for $\ell=91,\ldots,98$, of the mean FA as a smooth function of age. These eight voxels correspond to arc lengths 94.1--101.5, a range whose right terminus corresponds to the rightmost peak observed in \figref{rainbowfig}, and to the most prominent trough in the $p$-value plot of \figref{nonlinfig}. The upper left subfigure displays separate curve estimates of FA for the eight voxels. In line with the results shown in Figures~\ref{imageplotfig} through \ref{dfr}, the penalized OLS and GLS curves clearly overfit and underfit, respectively, whereas the FPC-scores and two-step estimates appear to do a reasonable job of ``denoising'' the separate curve estimates.
\begin{figure}
\centering
\includegraphics[width=.82\textwidth]{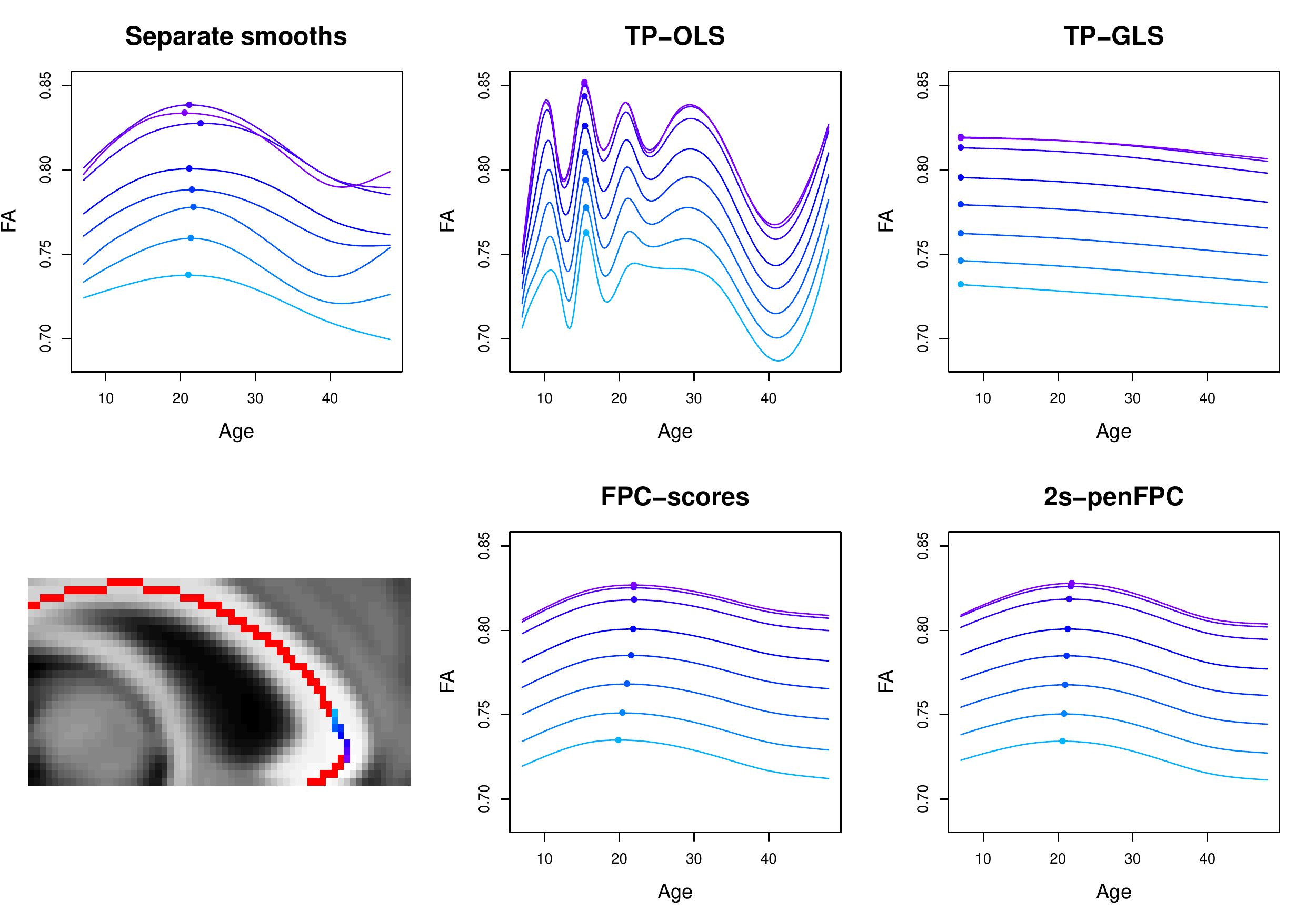} 
\caption{FA as a function of age for eight voxels in the prefrontal cortex (shown at lower left using the same blue-to-violet color scheme as the curves), as estimated by the indicated methods. Dots indicate estimated age of peak FA.}\label{pref}
\end{figure}

\section{Discussion}\label{discsec}
In this paper we have developed three approaches to fitting varying-smoother models with functional responses, and have introduced pointwise degrees of freedom, a tool for characterizing different model fits in this setting. We have focused on the case in which the function domain $\cal S$ is a finite interval on the real line. Future work will consider varying-smoother model methodology for more general $\cal S$, in particular ${\cal S}\subset \mathbb{R}^3$, as often occurs in neuroimaging applications. Linear models with spatially varying coefficients have been considered by a number of authors for this domain of application \citep[e.g.,][]{tabelow2006,smith2007,brezger2007,heim2007,li2011}, but little work, if any, has focused on  general smooth pointwise effects of predictors at different brain locations in multidimensional space. Varying-smoother modeling is by no means restricted to brain imaging applications, and we anticipate that it may be usefully applied to functional responses arising in many fields.

We have focused here on the case of a single scalar predictor. Further work is needed to extend both the modeling methodology and the definition of pointwise degrees of freedom to  multiple predictors and predictors that vary with $s$. We also would like to derive small- and large-sample error rates for the proposed estimators, which could provide insight into the disparate patterns of relative performance that we have observed under different scenarios.

We have restricted consideration here to point estimation of the mean function $f$. Interval estimation requires care even for simple nonparametric regression \citep{wood2006CI}; even more so for the more complex bivariate smoothers we have presented. Appendix~\ref{intapp} outlines approximate confidence interval methodology for the penalized two-step method of Section~\ref{penvt}, but much more research is needed on interval estimates for varying-smoother models.

Code for the methods discussed above is available for the authors, and we plan to disseminate some of the functions via the R package \texttt{refund} \citep{refund}, which is available on the CRAN repository (http://cran.r-project.org/web/\-packages/\-refund).

\subsubsection*{Acknowledgments}
The authors thank the reviewers in advance for their efforts, and Yin-Hsiu Chen, Ciprian Crainiceanu, Jeff Goldsmith, Lan Huo, Mike Milham, Todd Ogden, Eva Petkova, David Ruppert, Fabian Scheipl, and Simon Wood for very helpful advice and feedback. The first author's work was supported by National Science Foundation grant DMS-0907017, and the work of the first three authors was supported by National  Institutes of Health grant 1R01MH095836-01A1.

\appendix
\setcounter{equation}{0}
\renewcommand{\theequation}{A.\arabic{equation}}

\section{Banded inverse covariance estimate}\label{cbp}
For the estimate $\hat{\bSigma}^{-1}$ in the penalized (feasible) GLS criterion \eqref{pgls}, we use the banded variant \citep{bickel2008} of the modified Cholesky decomposition of the precision matrix \citep{pourahmadi1999}. This precision matrix estimate has the form $\hat{\bSigma}^{-1}=\bT^T\bD\bT$ where $\bD$ is a diagonal matrix with nonnegative diagonal entries and $\bT$ is a $k$-banded lower triangular matrix, ensuring that $\hat{\bSigma}^{-1}$ is $k$-banded and positive semidefinite. A banded precision matrix implies, under normality, that values at two distant locations along the function are conditionally independent, given the values at all other locations---a reasonable assumption for many, albeit not all, functional data sets. 

For choosing the number of bands $k$, \cite{bickel2008} propose a resampling procedure.  Here we choose $k$ by a new procedure that obviates the need for resampling.
Our method  is based on the work of \cite{ledoit2002} on high-dimensional sphericity tests. Let $\bS$ be the $p\times p$ sample covariance matrix for an $n\times p$ data matrix $\bX$. By Proposition~3 of \cite{ledoit2002}, 
\begin{equation}\label{lwstat}\frac{1}{2}[n p \mbox{ }\tr(\bS^2)/(\tr\bS)^2 - n - p - 1]\end{equation} is approximately standard normal for large $n,p$.
Our idea is to compute \eqref{lwstat} with $\bX$ taken to be the whitened residual matrix $(\bY-\bB_t\hat{\bTheta}\bB_s^T)\hat{\bSigma}^{-1/2}$, where $\hat{\bTheta}$ is the penalized OLS estimate \eqref{pols}, and $\hat{\bSigma}^{-1}$ is the $k$-banded estimate of \cite{bickel2008} for each of a range of values of $k$. Large positive values of \eqref{lwstat} indicate that multiplication by a $k$-banded square-root precision matrix is inadequate to remove the residual dependence, whereas large negative values signal ``overwhitening,'' i.e., the residual vectors exhibit smaller sample covariances than would typically arise by chance. We choose the value of $k$ for which the magnitude of \eqref{lwstat} is smallest, which generally seems to be a good compromise between these extremes. We have not studied how this criterion for choosing $k$ compares with the resampling method of \cite{bickel2008} for estimation of $\bSigma^{-1}$, but that is not the goal here. Rather, we need to transform correlated residuals to approximately whitened residuals for penalized GLS, and our proposal offers a means to that end that avoids computationally intensive tuning parameter selection.

\section{Smoothing parameter selection}\label{spsapp}
The penalized OLS minimization \eqref{grr} is a generalized ridge regression problem, for which automatic criteria  for choosing the tuning parameters $\lambda_s,\lambda_t$ \citep{reiss2009} can be readily optimized with the \texttt{mgcv} package  \citep{wood2006,wood2011} for R \citep{R}. However, application of criteria such as REML and generalized cross-validation \citep{craven1979}  to \eqref{grr} presupposes that the components of $\by$ are conditionally independent given $\btheta$---an untenable assumption here, as noted in \eqref{wfd}. To take within-function dependence into account when fitting  varying-\emph{coefficient} models with functional responses (see Section~\ref{avc}), \cite{ramsay2005} recommend choosing the smoothing parameters by leave-one-function-out cross-validation \citep{rice1991}. 

On the other hand,
\cite{krivobokova2007} have shown that REML-based smoothness selection is quite robust to correlated errors. Consistent with this, some authors \citep[e.g.,][]{crainiceanu2012} have reported good performance of REML-based smoothing in functional-response analyses that treat all the residuals as independent. Moreover, unlike REML, cross-validation remains difficult  with multiple smoothing parameters. Hence, in Sections~\ref{simsec} and \ref{realsec}, we examine the performance of penalized OLS with REML-based smoothness selection, ignoring the within-function dependence.

In penalized GLS, we attempt to remove within-function dependence by prewhitening. The minimization problem \eqref{pglsvec} 
is tantamount to penalized OLS for response vectors $\hat{\bSigma}^{-1/2}\by_{1\cdot},\ldots,\hat{\bSigma}^{-1/2}\by_{n\cdot}$, for which the within-function covariance, conditional on $x_i$, is approximately $\bI_L$. Thus the $nL$ residuals, in the normal mixed model representation underlying REML selection of  $\lambda_s,\lambda_t$, can reasonably be viewed as independent and identically distributed \citep[cf.][]{reiss2010}.

\section{Tensor product penalty derivations: Proofs of Theorems~\ref{tpen} and \ref{vdtpen}}\label{pt1}
\subsection{Proof of Theorem~\ref{tpen}} By \eqref{tpdef}, for each $v$, $f(\cdot,s)$ is of form (\ref{betabx}) with $\bgamma=\bTheta\bb_s(s)$.  Thus $r_t[f(\cdot,s)]=\bb_s(s)^T\bTheta^T\bP_t\bTheta\bb_s(s)$ and
\begin{eqnarray*}\int_{\cal S} r_t[f(\cdot,s)]dv & = & \int_{\cal S} \tr\left[\bb_s(s)\bb_s(s)^T\bTheta^T\bP_t\bTheta\right]dv \\
& = &\tr\left[\int_{\cal S} \bb_s(s)\bb_s(s)^Tds\bTheta^T\bP_t\bTheta\right] \\
& = & \tr(\bQ_s\bTheta^T\bP_t\bTheta).\end{eqnarray*}
Using the identity $\tr(\bK^T\bL)=(\vecop\bK)^T(\vecop\bL)$ and standard results for Kronecker products, we obtain
\begin{eqnarray*}\int_{\cal S} r_t[f(\cdot,s)]ds & = & \vecop(\bTheta\bQ_s^T)^T\vecop(\bP_t\bTheta) \\ 
& = & [(\bQ_s\otimes\bI_{K_t})\btheta]^T[(\bI_{K_s}\otimes\bP_t)\btheta] \\
& = & \btheta^T(\bQ_s\otimes\bP_t)\btheta.\end{eqnarray*}
Arguing similarly for $\int_{\cal T} r_s[f(t,\cdot)]dt$ and substituting into (\ref{penf}) yields \eqref{wct}.

\subsection{Proof of Theorem~\ref{vdtpen}} Generalizing the proof of Theorem~\ref{tpen}, we have
\begin{eqnarray*}\int_{\cal S} b^*_k(s)r_t[f(\cdot,s)]ds & = & \int_{\cal S} \tr\left[b^*_k(s)\bb_s(s)\bb_s(s)^T\bTheta^T\bP_t\bTheta\right]ds \\
& = &\tr\left[\int_{\cal S} b^*_k(s)\bb_s(s)\bb_s(s)^Tds\bTheta^T\bP_t\bTheta\right] \\
& = & \tr(\bQ^{b^*_k}_s\bTheta^T\bP_t\bTheta)\\
& = & \btheta^T(\bQ^{b^*_k}_s\otimes\bP_t)\btheta.\end{eqnarray*}
Thus, replacing $\lambda_t$ by \eqref{smallbasis} converts penalty \eqref{wct} to \eqref{ncpen}.

\section{Computational details for the FPC-based methods}\label{compfpc}
\subsection{Smoothed FPC scores method}\label{compchiou}
An approximate matrix equation for the Karhunen-Lo\`{e}ve expansion \eqref{kle} is given by
\begin{equation}\label{ame}\bY\approx\bone_n\hat{\bmu}^T + \hat{\bC}\bV_A^T\bB_s^T,\end{equation}
where  $\hat{\bmu}\in\mathbb{R}^L$ is an estimate of the discretized mean function, and $\bV_A=(\bv_1\ldots\bv_A)$ with $\bv_1,\ldots,\bv_A$ chosen so that $\bv_a^T\bb_s(\cdot)$ is an estimate of $\phi_a(\cdot)$.
 To obtain the required estimates in \eqref{ame}, and thereby estimate model \eqref{chioumod}, we proceed as follows:
\begin{enumerate}[(i)]
\item As a standard presmoothing step for functional data \citep{ramsay2005, ramsay2009}, project the rows of the raw response matrix $\bY$ onto the span of the $v$-direction basis to obtain $\bY\bB_s(\bB_s^T\bB_s)^{-1}\bB_s^T$. Then take the simple mean function estimate 
\begin{equation}\label{mfe}\hat{\bmu}^T=\frac{1}{n}\bone_n^T\bY\bB_s(\bB_s^T\bB_s)^{-1}\bB^T_s.\end{equation} 
\item By the argument of \cite{ramsay2005} adapted to our notation, if $\bu_a$ is the $a$th leading eigenvector of 
\[n^{-1}\bQ^{1/2}_s(\bB_s^T\bB_s)^{-1}\bB_s^T\bY^T(\bI_n-\bJ_n)\bY\bB_s(\bB_s^T\bB_s)^{-1}\bQ^{1/2}_s,\]
 then the $a$th estimated PC function is given by $\hat{\phi}_a(\cdot)=\bv_a^T\bb_s(\cdot)$ where $\bv_a=\bQ_s^{-1/2}\bu_a$.\
 The estimated PC scores are given by 
 \begin{equation}\label{epcs}\hat{\bC}=(\hat{\bc}_1\ldots\hat{\bc}_A)=(\bI_n-\bJ_n)\bY\bB_s(\bB_s^T\bB_s)^{-1}\bQ_s\bV_A.\end{equation} 
 \item For $a=1,\ldots,A$, we view the $a$th PC score as a smooth function $g_a(t)$ of $t$, and fit the smooth $\hat{\bg}_a=[\hat{g}_a(t_1),\ldots,\hat{g}_a(t_n)]^T=\bB_t(\bB_t^T\bB_t+\lambda_a\bP_t)^{-1}\bB_t^T\hat{\bc}_a$, where $\lambda_a$ is chosen to optimize the REML criterion.   Let $\hat{\bG}=(\hat{\bg}_1\ldots\hat{\bg}_A)$. 
\item The fitted values are \begin{eqnarray}\nonumber\hat{\bY}&=&\bone^T_n\hat{\bmu}^T + \hat{\bG}\bV_A^T\bB_s^T\\&=&\bJ_n\bY\bB_s(\bB_s^T\bB_s)^{-1}\bB_s^T + \hat{\bG}\bV_A^T\bB_s^T.\label{yhatchiou}\end{eqnarray}
\end{enumerate}
Steps~(i) and (ii) can be implemented using the functions \texttt{Data2fd} and \texttt{pca.fd} of the R package \texttt{fda} \citep{ramsay2009}. (An alternative implementation of functional PCA is available in the \texttt{refund} package \citep{refund}.) We do not impose roughness penalties here, as these  would require tuning by cross-validation, which would likely offer minimal benefit for these intermediate steps. We do, however, use cross-validation to choose $A$, the number of FPCs.

\subsection{FPC variants of the two-step method}\label{fpcvs}
The (unpenalized) FPC variant of step~2 (Section~\ref{fpcvt}) proceeds as follows:
\begin{enumerate}[(i)]
\item Similar to step (i) in Section~\ref{compchiou}, we begin with a light presmoothing step of projecting each row of $\tilde{\bY}$ onto the span of the $v$-direction basis, resulting in the $n\times L$ matrix $\tilde{\bY}\bB_s(\bB_s^T\bB_s)^{-1}\bB^T_s$.
\item Each row of that matrix is decomposed into the discretized estimated mean function \eqref{mfe}    and a deviation from the mean function, to obtain
\begin{equation}\label{dcomp}\bJ_n\bY\bB_s(\bB_s^T\bB_s)^{-1}\bB^T_s+(\tilde{\bY}-\bJ_n\bY)\bB_s(\bB_s^T\bB_s)^{-1}\bB^T_s.\end{equation}
\item The rows of the second matrix in \eqref{dcomp} are projected onto the span of the leading estimated PC functions $\hat{\phi}_1,\ldots,\hat{\phi}_A$ of the raw response data. In the notation of Section~\ref{compchiou}, this projection is perfomed by postmultiplying by $\bB_s\bV_A(\bV^T_A\bB_s^T\bB_s\bV_A)^{-1}\bV^T_A\bB^T_s$, yielding the fitted values \begin{equation}\hat{\bY}=\bJ_n\bY\bB_s(\bB_s^T\bB_s)^{-1}\bB^T_s+(\tilde{\bY}-\bJ_n\bY)\bB_s\bV_A(\bV^T_A\bB_s^T\bB_s\bV_A)^{-1}\bV^T_A\bB^T_s.\label{yfpc}\end{equation}
\end{enumerate}

The penalized FPC variant (Section~\ref{penfpcvt}) is also implemented via substeps (i)--(iii), with $(\bV^T_A\bB_s^T\bB_s\bV_A)^{-1}$ replaced by $[\bV^T_A(\bB_s^T\bB_s+\lambda_s\bV^T_A\bP_s)\bV_A]^{-1}$. However, tuning parameter selection proceeds differently. For the unpenalized FPC variant, we use cross-validation to choose $A$. For the penalized FPC variant, we use a large fixed $A$, say the number of components needed to explain 99\% of the variance, and use cross-validation to choose $\lambda_s$.

\section{Pointwise df derivations: Proofs of Theorems~\ref{pwdfvc}--\ref{thm2s}}\label{pwdfapp}
\subsection{Proof of Theorem~\ref{pwdfvc}}
By \eqref{vccrit} and \eqref{hvc}, the hat matrix equals
\[\mathbf{\cal H}=(\bB_s\otimes\bX)[(\bB_s^T\hat{\bSigma}^{-1}\bB_s)\otimes(\bX^T\bX)+(\bP_s\otimes\bLambda)]^{-1}[(\bB_s^T\hat{\bSigma}^{-1})\otimes\bX^T].\]
Thus, by \eqref{dfh},
\begin{eqnarray}d\el 
& = & \tr\left[\{\bb_{s}(s\el)^T\otimes\bX\}\left\{(\bB_s^T\hat{\bSigma}^{-1}\bB_s)\otimes(\bX^T\bX)+(\bP_s\otimes\bLambda)\right\}^{-1}\right. \nonumber\\
&&\qquad\qquad\left.\times\{(\bB_s^T\hat{\bSigma}^{-1}\bone_L)\otimes\bX^T\}\right].\label{dfx}\end{eqnarray}
This expression can be simplified by noting that
\begin{eqnarray*}(\bB_s^T\hat{\bSigma}^{-1}\bone_L)\otimes\bX^T&=&(\bB_s^T\hat{\bSigma}^{-1}\bB_s\bone_{K_s})\otimes[(\bX^T\bX)(\bX^T\bX)^{-1}\bX^T]\mbox{ [by \eqref{as1}]}\\
&=&\left[(\bB_s^T\hat{\bSigma}^{-1}\bB_s)\otimes(\bX^T\bX)\right]\left[\bone_{K_s}\otimes\{(\bX^T\bX)^{-1}\bX^T\}\right]\\
&=&\left[(\bB_s^T\hat{\bSigma}^{-1}\bB_s)\otimes(\bX^T\bX)+(\bP_s\otimes\bLambda)\right]\\&&\qquad\qquad\times\left[\bone_{K_s}\otimes\{(\bX^T\bX)^{-1}\bX^T\}\right]\mbox{ [by \eqref{as2}]}.
\end{eqnarray*}
 Substituting  into (\ref{dfx}) yields
\begin{eqnarray*}d\el &=& \tr\left[\{\bb_{s}(s\el)^T\otimes\bX\}\left\{\bone_{K_s}\otimes\left[(\bX^T\bX)^{-1}\bX^T\right]\right\}\right] \\
&=& [\bb_{s}(s\el)^T\bone_{K_s}] \mbox{ }\tr[\bX(\bX^T\bX)^{-1}\bX^T] \\
& = & 1\cdot p\end{eqnarray*}
for each $\ell$, where the last step uses \eqref{as1} again.

\subsection{Proof of Theorem~\ref{pp}} 
The hat matrix equals
\[\mathbf{\cal H}=\left\{\begin{array}{ll}(\bB_s \otimes \bB_t)[(\bB_s^T  \bB_s)\otimes (\bB_t^T \bB_t)+{\cal P}]^{-1}(\bB_s^T \otimes \bB_t^T), & \mbox{for penalized OLS};\\(\bB_s \otimes \bB_t)[(\bB_s^T \hat{\bSigma}^{-1} \bB_s)\otimes (\bB_t^T \bB_t)+{\cal P}]^{-1}[(\bB_s^T \hat{\bSigma}^{-1})\otimes \bB_t^T], & \mbox{for penalized GLS}.\end{array}\right.\]
By \eqref{dfh}, for $\ell=1,\ldots,L$,
\begin{eqnarray}
d\el &=& \mbox{tr}[(\be\el^T \otimes \bI_n)\mathbf{\cal H} (\bone_L \otimes \bI_n)] \nonumber\\
&=&\mbox{tr}\left[(\be\el^T \otimes \bI_n)(\bB_s\otimes \bB_t){\cal M}^T\right] \nonumber \\ 
&=&\bone_n^T[\{(\be\el^T \otimes \bI_n) (\bB_s \otimes \bB_t)\} \odot {\cal M}] \bone_{K_s K_t}. \label{deltp}
\end{eqnarray}
Since
\begin{eqnarray*}\left(\begin{array}{c}\bone_n^T[\{(\be_1^T \otimes \bI_n) (\bB_s \otimes \bB_t)\} \odot {\cal M}]\\\vdots\\\bone_n^T[\{(\be_L^T \otimes \bI_n) (\bB_s \otimes \bB_t)\} \odot {\cal M}]\end{array}\right)
&=&\left(\begin{array}{cccc}\bone_n^T&0&\ldots&0\\
0&\bone_n^T&\ldots&0\\
\vdots&&\ddots&\\
0&\ldots&0&\bone_n^T
\end{array}\right)
\\&&\qquad\times\left[\left\{\left(\begin{array}{c}\be_1^T \otimes \bI_n \\\vdots\\\be_L^T \otimes \bI_n\end{array}\right)(\bB_s \otimes \bB_t)\right\}
\odot\left(\begin{array}{c}{\cal M}\\\vdots\\{\cal M}\end{array}\right)\right] \\
&=&(\bI_L\otimes\bone_n^T)[(\bB_s \otimes \bB_t)\odot(\bone_L\otimes{\cal M})],\end{eqnarray*}
it follows that \eqref{deltp} is the $\ell$th component of \eqref{tpd}, as required.

\subsection{Proof of Theorem~\ref{chiouthm}}
Let $\hat{\bY}_1$ and $\hat{\bY}_2$ denote the two summands in \eqref{yhatchiou}.
By \eqref{dfh}, 
\[d\el=\tr\left[(\be\el^T\otimes\bI_n)\mathbf{\cal H}_1(\bone_L\otimes\bI_n)\right]+\tr\left[(\be\el^T\otimes\bI_n)\mathbf{\cal H}_2(\bone_L\otimes\bI_n)\right],\]
where $\mathbf{\cal H}_1,\mathbf{\cal H}_2$ are given by $\vecop(\hat{\bY}_k)=\mathbf{\cal H}_k\vecop(\bY)$ for $k=1,2$. Let $\bd^{(k)}=[d^{(k)}_1,\ldots,d^{(k)}_L]^T$ ($k=1,2$) denote the corresponding contributions to the pointwise df: thus $\bd=\bd^{(1)}+\bd^{(2)}$. The proof proceeds by deriving $\bd^{(1)}$ and $\bd^{(2)}$.

By \eqref{yhatchiou},  $\mathbf{\cal H}_1=[\bB_s(\bB_s^T\bB_s)^{-1}\bB_s^T] \otimes \bJ_n$. We can derive $\bd^{(1)}$ by means of the following lemma, whose proof is straightforward and is therefore omitted.  
 \begin{lemma} \label{kronlem} If $\bA$ is an $L\times L$ matrix and $\bB$ is an $n\times n$ matrix, then 
\[\left(\begin{array}{c} \tr\left[(\be_1^T\otimes\bI_n)(\bA\otimes\bB)(\bone_L\otimes\bI_n)\right]\\\vdots\\ \tr\left[(\be_L^T\otimes\bI_n)(\bA\otimes\bB)(\bone_L\otimes\bI_n)\right] \end{array}\right)=[\tr(\bB)]\bA\bone_L.\]
\end{lemma}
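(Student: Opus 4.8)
The plan is to evaluate the generic $\ell$th entry on the left-hand side directly and then read off the vector form by stacking over $\ell$. The only tool required is the mixed-product property of the Kronecker product, $(\bP\otimes\bQ)(\bR\otimes\bS)=(\bP\bR)\otimes(\bQ\bS)$, applied whenever the inner products are conformable.

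First I would collapse the triple product inside the trace. Applying the mixed-product rule to the first two factors gives $(\be\el^T\otimes\bI_n)(\bA\otimes\bB)=(\be\el^T\bA)\otimes\bB$, using $\bI_n\bB=\bB$. Applying it once more, now to this result and the third factor $\bone_L\otimes\bI_n$, yields $[(\be\el^T\bA)\otimes\bB](\bone_L\otimes\bI_n)=(\be\el^T\bA\bone_L)\otimes\bB$, using $\bB\bI_n=\bB$. The one piece of bookkeeping to watch is that $\be\el^T\bA$ is $1\times L$, namely the $\ell$th row of $\bA$, so the Kronecker factors remain conformable at each step and $\be\el^T\bA\bone_L$ is a genuine scalar.

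Next I would identify this scalar: $\be\el^T\bA\bone_L$ is precisely the $\ell$th entry of the vector $\bA\bone_L$. Consequently the $n\times n$ matrix inside the trace equals $(\bA\bone_L)_\ell\,\bB$, so its trace is $(\bA\bone_L)_\ell\,\tr(\bB)$. Stacking these scalars over $\ell=1,\ldots,L$ gives the vector whose $\ell$th component is $(\bA\bone_L)_\ell\,\tr(\bB)$, which is exactly $\tr(\bB)\,\bA\bone_L$, as claimed.

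The argument is essentially routine, so there is no substantive obstacle beyond careful dimension tracking; the author's choice to omit the proof is justified. The single point worth emphasizing is that $\be\el^T\bA\bone_L$ is a scalar rather than a matrix, which is what legitimizes pulling it outside both the Kronecker product and the trace and thereby factoring $\tr(\bB)$ cleanly out of every component.
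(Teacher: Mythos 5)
Your proof is correct. The paper omits the proof of this lemma as straightforward, and your computation---collapsing $(\be_\ell^T\otimes\bI_n)(\bA\otimes\bB)(\bone_L\otimes\bI_n)$ to $(\be_\ell^T\bA\bone_L)\otimes\bB$ via the mixed-product property and then taking the trace of the scalar multiple of $\bB$---is exactly the routine argument the authors had in mind, with the dimension bookkeeping handled properly.
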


By Lemma~\ref{kronlem} and \eqref{as1}, 
\begin{equation}\label{dfh1}\bd^{(1)}=1\cdot\bB_s(\bB_s^T\bB_s)^{-1}\bB_s^T\bone_L=\bone_L.\end{equation}
(This is consistent with Theorem~\ref{pwdfvc}, since $\hat{\bY}_1$ represents an intercept function.)

To obtain $\mathbf{\cal H}_2$ and $\bd^{(2)}$, we require each of the linear transformations 
\[\vecop(\bY)\mymap^{(a)}\vecop(\hat{\bC})\mymap^{(b)}\vecop(\hat{\bG})\mymap^{(c)}\vecop(\hat{\bY}_2).\]
 
\begin{enumerate}[(a)]
\item By \eqref{epcs}, 
\begin{equation}\label{tr1}\vecop(\hat{\bC})=\left[\{\bV_A^T\bQ_s(\bB_s^T\bB_s)^{-1}\bB_s^T\} \otimes(\bI_n-\bJ_n)\right]\vecop(\bY).\end{equation}   
\item Similar to \eqref{ytilde}, we have $\hat{\bG}=\bA_t[\bM^*\odot(\bA_t^T\hat{\bC})]$, and hence 
\begin{eqnarray}\vecop(\hat{\bG})&=&(\bI_A\otimes\bA_t)\vecop[\bM^*\odot(\bA_t^T\hat{\bC})]\nonumber\\
&=&(\bI_A\otimes\bA_t)\bD_{M^*}(\bI_A\otimes\bA^T_t)\vecop(\hat{\bC}),
\end{eqnarray}
where $\bD_{M^*}=\mbox{Diag}[\vecop(\bM^*)]$, in view of the following lemma.
\begin{lemma} \label{nkilem} Let $\bQ,\bR,\bS$ be matrices of dimension $a\times c$, $a\times b$ and $b\times c$,
 respectively. Then 
 \[\vecop[\bQ\odot(\bR\bS)]=\bD_Q(\bI_c\otimes\bR)\vecop(\bS),\] where $\bD_{Q}=\mbox{Diag}[\vecop(\bQ)]$.
\end{lemma}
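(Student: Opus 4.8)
The plan is to prove the identity by composing two elementary, well-known vectorization facts applied in sequence; the two factors on the right-hand side arise naturally once the Hadamard product and the matrix product are treated separately.

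First I would dispose of the Hadamard product. Since $\odot$ acts entrywise, vectorization carries it to the entrywise product of the vectorized factors: for any matrix $\bM$ of the same $a\times c$ dimension as $\bQ$, one has $\vecop(\bQ\odot\bM)=\vecop(\bQ)\odot\vecop(\bM)$. Rewriting the entrywise multiplication by the fixed vector $\vecop(\bQ)$ as left-multiplication by the diagonal matrix $\bD_Q=\mbox{Diag}[\vecop(\bQ)]$ yields $\vecop(\bQ\odot\bM)=\bD_Q\vecop(\bM)$.

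Next I would handle the matrix product $\bM=\bR\bS$ using the standard vec--Kronecker identity $\vecop(\bA\bX\bB)=(\bB^T\otimes\bA)\vecop(\bX)$. Taking $\bA=\bR$, $\bX=\bS$, and $\bB=\bI_c$ (so that $\bR\bS\bI_c=\bR\bS$) gives $\vecop(\bR\bS)=(\bI_c^T\otimes\bR)\vecop(\bS)=(\bI_c\otimes\bR)\vecop(\bS)$. Substituting $\bM=\bR\bS$ into the first step and then replacing $\vecop(\bR\bS)$ by this expression produces $\vecop[\bQ\odot(\bR\bS)]=\bD_Q(\bI_c\otimes\bR)\vecop(\bS)$, as claimed.

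There is no serious obstacle here: the result is simply the composition of two textbook identities, which is presumably why the authors call its proof ``straightforward.'' The only thing requiring genuine care is the dimension bookkeeping---confirming that $\bR\bS$ is $a\times c$ so that it conforms with $\bQ$ for the Hadamard product, and that $\bD_Q$ (of size $ac\times ac$) and $(\bI_c\otimes\bR)\vecop(\bS)$ (a vector of length $ac$) are compatible, so that the two steps compose without mismatch.
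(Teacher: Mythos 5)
Your proof is correct and is essentially the paper's own argument spelled out in more detail: the paper simply observes that both sides equal $\vecop(\bQ)\odot\vecop(\bR\bS)$, which is exactly the composition of your two steps (the Hadamard--vec identity plus $\bD_Q\bv=\vecop(\bQ)\odot\bv$ on the one hand, and $\vecop(\bR\bS)=(\bI_c\otimes\bR)\vecop(\bS)$ on the other). Nothing further is needed.
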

\begin{proof} Both expressions are equal to $\vecop(\bQ)\odot\vecop(\bR\bS)$.\end{proof}
\item By  \eqref{yhatchiou}, 
\begin{equation}\label{tr3}\vecop(\hat{\bY}_2) = [(\bB_s\bV_A)\otimes\bI_n]\vecop(\hat{\bG}).\end{equation}
 \end{enumerate}
 Combining \eqref{tr1}--\eqref{tr3},
\begin{eqnarray*}\mathbf{\cal H}_2&=& [(\bB_s\bV_A)\otimes\bI_n](\bI_A\otimes\bA_t)\bD_{M^*}(\bI_A\otimes\bA^T_t)\left[\{\bV_A^T\bQ_s(\bB_s^T\bB_s)^{-1}\bB_s^T\} \otimes(\bI_n-\bJ_n)\right]\\
&=& [(\bB_s\bV_A)\otimes\bA_t]\bD_{M^*}\left[\{\bV_A^T\bQ_s(\bB_s^T\bB_s)^{-1}\bB_s^T\} \otimes\bA^{cT}_t\right].
\end{eqnarray*}
Thus
\begin{eqnarray*}d\el^{(2)}&=&\tr\left[(\be\el^T\otimes\bI_n)\mathbf{\cal H}_2(\bone_L\otimes\bI_n)\right] \\ 
&=&\tr\left[ \{(\bb_s(s\el)^T\bV_A)\otimes\bA_t\}\bD_{M^*}\left(\{\bV_A^T\bQ_s(\bB_s^T\bB_s)^{-1}\bB_s^T\bone_L\} \otimes\bA^{cT}_t\right) \right]\\ 
&=&\tr\left[ \{(\bb_s(s\el)^T\bV_A)\otimes\bA_t\}\bD_{M^*}\{(\bV_A^T\bQ_s\bone_{K_s}) \otimes\bA^{cT}_t\}\right],
\end{eqnarray*}
by \eqref{as1}. By Theorem~8.15(b) of \cite{schott2005}, this implies
\begin{eqnarray}d\el^{(2)}&=&\bone_n^T\left[ \{(\bb_s(s\el)^T\bV_A)\otimes\bA_t\}\odot\{(\bone_{K_s}^T\bQ_s\bV_A) \otimes\bA^{c}_t\} \right]\vecop(\bM^*)\nonumber\\
&=&\bone_n^T\left[\{(\bb_s(s\el)^T\bV_A)\odot(\bone_{K_s}^T\bQ_s\bV_A)\}\otimes(\bA_t\odot\bA^{c}_t) \right]\vecop(\bM^*) \nonumber\\
&=&\bone_n^T(\bA_t\odot\bA^{c}_t)\bM^*\left[\{\bV_A^T\bb_s(s\el)\}\odot(\bV_A^T\bQ_s\bone_{K_s})\right].\label{dfh2}
\end{eqnarray}
Thus $\bd^{(2)}$ equals the second term of \eqref{dfchiou}.
Combining this with \eqref{dfh1} completes the proof.

\subsection{Proof of Theorem~\ref{thm2s}}
\paragraph{Parts (a) and (b).}
By (\ref{ytilde}) and Lemma~\ref{nkilem},
$\tilde{\by}=(\bI_L\otimes\bA_t)\bD_M(\bI_L\otimes\bA_t^T)\by$,
where $\bD_M=\mbox{Diag}[\vecop(\bM)]$.  Thus, by \eqref{fit2}, 
\[\hat{\by} =(\bH_s\otimes\bI_n)\tilde{\by}= \mathbf{\cal H}\by\]
 with $\mathbf{\cal H}=(\bH_s\otimes\bA_t)\bD_M(\bI_L\otimes\bA_t^T)$.  Substituting this into (\ref{dfh}) yields, for $\ell=1,\ldots,L$, 
\[d\el=\tr\left[\{(\be\el^T\bH_s)\otimes\bA_t\}\bD_M(\bone_L\otimes\bA_t^T)\right].\]
 Mimicking the steps leading to \eqref{dfh2}, we obtain
 $d\el=\bone_n^T(\bA_t\odot\bA_t)\bM\bH_s^T\be\el$, 
  and thus \[\bd=\bH_s\bM^T(\bA_t^T\odot\bA_t^T)\bone_n.\] Replacing $\bH_s$ by $\bI_L$ in the above leads to 
\begin{equation}\label{tbd}\tilde{\bd}=\bM^T(\bA_t^T\odot\bA_t^T)\bone_n\end{equation}
and hence $\bd=\bH_s\tilde{\bd}$, proving part (\ref{dfpen}). 

Part (\ref{step1df}) follows from \eqref{tbd} implies upon noting that 
\begin{equation}\label{aon}\bA_t^T\bA_t=\bI_{K_t}\end{equation} and hence $(\bA_t^T\odot\bA_t^T)\bone_n=\bone_{K_t}$.
Alternatively, part~(\ref{step1df}) can be proved directly by noting that $\tilde{\bd}=\bM^T\bone_{K_t}$ is equivalent to $\tilde{d}\el=\sum_{i=1}^{K_t}\frac{1}{1+\lambda_{x \ell} \tau_i}$ for $\ell=1,\ldots,L$. This is a standard df formula for Demmler-Reinsch orthogonalization \citep[][p.\ 336]{ruppert2003}.

\paragraph{Part (\ref{dffpc}).}
By \eqref{yfpc},  $\bd=\bd^{(1)}+\bd^{(2)}+\bd^{(3)}$ where $\bd^{(1)},\bd^{(2)},\bd^{(3)}$ are the respective contributions of 
$\bJ_n\bY\bB_s(\bB_s^T\bB_s)^{-1}\bB^T_s$, $\tilde{\bY}\bB_s\bV_A\bN^{-1}\bV^T_A\bB^T_s$ and  $-\bJ_n\bY\bB_s\bV_A\bN^{-1}\bV^T_A\bB^T_s$.
As in the proof of Theorem~\ref{chiouthm}, 
$\bd^{(1)}=\bone_L$. Arguing as in the proof of Theorem~\ref{thm2s}(b) but with $\bB_s\bV_A\bN^{-1}\bV^T_A\bB^T_s$ replacing $\bH_s$, $\bd^{(2)}=\bB_s\bV_A\bN^{-1}\bV^T_A\bB^T_s\tilde{\bd}$. Invoking Lemma~\ref{kronlem} as in the proof of Theorem~\ref{chiouthm} gives 
$\bd^{(3)} =-\bB_s\bV_A\bN^{-1}\bV^T_A\bB^T_s\bone_L$. Combining these results yields \eqref{dfppp}.

\section{Functional $R^2$ for simulated data}\label{r2app}
We noted in Section~\ref{simdesign} that the simulated functional responses were given by $y_i(s)=f(t_i,s)+\varepsilon_i(s)$, where $\varepsilon_i(s)$ was generated via the sum of two independent processes with variances $\gamma\sigma^2$ and $\sigma^2$. We further noted that $\sigma^2$ was chosen to attain specified values of $R^2$. We now show how this is done.

Suppose we have a preliminary set of responses 
\begin{equation}\label{yprelim}y_i^*(\cdot)=f(t_i,\cdot)+\varepsilon^*_i(\cdot),\end{equation} $i=1,\ldots,n$, 
with the $\varepsilon^*_i(\cdot)$'s given by \eqref{vee}, as described in Section~\ref{simdesign}, for some $\sigma^2,\gamma$; let $\bar{y}^*(\cdot)$ denote their sample mean. By \eqref{r2def} and the identity $y^*_i(s)-\bar{y}^*(s)=\varepsilon^*_i(s)+[f(t_i,s)-\bar{y}^*(s)]$, the true-model coefficient of determination for these preliminary responses is $R^2(y^*_1,\ldots,y^*_n)=1-\frac{A}{A+B+2C}$ where
\begin{eqnarray*}A&=&\sum_{i=1}^n \sum_{s\in S}\varepsilon^*_i(s)^2,\\
B&=&\sum_{i=1}^n \sum_{s\in S}[f(t_i,s)-\bar{y}^*(s)]^2,\\
C&=&\sum_{i=1}^n \sum_{s\in S}\varepsilon^*_i(s)[f(t_i,s)-\bar{y}^*(s)].\end{eqnarray*}
 If we define 
 \begin{equation}\label{ysig}\varepsilon^{(\kappa)}_i(s)=\kappa\varepsilon^*_i(s)\mbox{ and }y^{(\kappa)}_i(s)=f(t_i,s)+\varepsilon^{(\kappa)}_i(s)\end{equation} 
 $(i=1,\ldots,n)$ for any $\kappa>0$, then 
 $R^2(y^{(\kappa)}_1,\ldots,y^{(\kappa)}_n)\approx 1-\frac{\kappa^2A}{\kappa^2A+B+2\kappa C}$. Exact equality does not hold here because the mean of $y^{(\kappa)}_1,\ldots,y^{(\kappa)}_n$ differs slightly from $\bar{y}^*$. However, this approximation serves as the basis for the following iterative algorithm.
 \begin{enumerate}
 \item Fix $\sigma^2=1$, choose some $\gamma>0$, and generate the preliminary responses \eqref{yprelim} as above.
 \item\label{stepp2} Obtain modified responses \eqref{ysig}, with $\kappa$ chosen so that $R^2=1-\frac{\kappa^2A}{\kappa^2A+B+2\kappa C}$ for the desired $R^2$. By the quadratic formula, we can take $\kappa=\frac{C(1-R^2)+\sqrt{C^2(1-R^2)^2+ABR^2(1-R^2)}}{AR^2}$.
 \item Compute the actual $R^2$ \eqref{r2def} for  $y^{(\kappa)}_1,\ldots,y^{(\kappa)}_n$. If it is not within a set tolerance (say, 0.0001) of the desired $R^2$, set $\varepsilon^*_i=\varepsilon^{(\kappa)}_i$ for each $i$ and return to step~\ref{stepp2}; otherwise set $y_i=y^{(\kappa)}_i$ for each $i$, and the algorithm is done.
\end{enumerate} 
In practice we have found this algorithm to converge very quickly. Note that if $\kappa_1,\ldots,\kappa_m$ are the values of $\kappa$ derived in step~\ref{stepp2} of the successive iterations, then the final responses $y_1,\ldots,y_n$ have in effect been generated with variance parameter $\sigma^2=\kappa_1^2\ldots \kappa_m^2$.

\section{Interval estimation for the penalized variant of the two-step method}\label{intapp}
By (\ref{gest}) and Lemma~\ref{nkilem}, the two-step estimate of $f(t,s)$, with the penalized variant  of step~2 (Section~\ref{penvt}), is
\begin{equation}\hat{f}(t,s) = \left[\{\bb_s(s)^T(\bB_s^T\bB_s+\lambda_s\bP_s)^{-1}\bB_s^T\}\otimes \{\bb_t(t)^T\bR_t^{-1}\bU_t\}\right]\bD_M(\bI_L\otimes\bA_t^T)\by.\label{gxs}
\end{equation}
 Assuming between-curve independence, we have 
 $\widehat{\var}(\by)=\hat{\bSigma}\otimes\bI_n$ 
 where $\hat{\bSigma}$ is an estimate of $\left(\cov[y(s_i),y(s_j)|x]\right)_{1\leq i,j\leq L}$. 
Combining this with (\ref{gxs}) yields 
\[\widehat{\var}[\hat{f}(t,s)]=\|\bT_0[\bb_s(s)\otimes\bb_t(t)]\|^2\]
where 
$\bT_0=(\hat{\bSigma}^{1/2}\otimes\bA_t)\bD_M\left[\{\bB_s(\bB_s^T\bB_s+\lambda_s\bP_s)^{-1}\}\otimes \{\bU_t^T\bR_t^{-T}\}\right]$.
 By (\ref{aon}), we can use the equivalent but more computationally efficient formula
\begin{equation}\label{nvec}\widehat{\var}[\hat{f}(t,s)]=\|\bT[\bb_s(s)\otimes\bb_t(t)]\|^2\end{equation}
where 
 $\bT=(\hat{\bSigma}^{1/2}\otimes\bI_{K_t})\bD_M\left[\{\bB_s(\bB_s^T\bB_s+\lambda_s\bP_s)^{-1}\}\otimes \{\bU_t^T\bR_t^{-T}\}\right]$.
 
 Given the $LK_t\times K_sK_t$ matrix $\bT$, it is straightforward to compute a matrix of pointwise variance estimates $\hat{\bV}=\left(\widehat{\var}[\hat{f}(t_g^*,s_h^*)]\right)_{1\leq g\leq G, 1\leq h\leq H}$, since \eqref{nvec} implies 
\begin{equation}\label{lasteq}\vecop(\hat{\bV})= \bone_{LK_t}^T\left[\left\{\bT(\bB_s^{*T}\otimes\bB_t^{*T})\right\}^{\odot 2}\right],\end{equation}
where 
$\bB_s^*=[b_{s.j}(s^*_h)]_{1\leq h\leq H,1\leq j\leq K_s}$, $\bB_t^*=[b_{t.j}(t^*_g)]_{1\leq i\leq G,1\leq j\leq K_t}$ and $\bE^{\odot 2}\equiv\bE\odot\bE$. Pointwise confidence intervals based   on \eqref{lasteq} treat the smoothing parameters as fixed. This shortcut of ignoring the variability due to smoothing parameter selection is fairly standard for ordinary semiparametric regression, but further study is required to assess its impact on coverage for two-step varying-smoother models.

\bibliographystyle{chicago}
\bibliography{vsm-arxiv}

\end{document}